\newcommand{\ignore}[1]{}
\newtheorem{corol}[theorem]{Corollary}
\DeclareMathOperator{\dec}{dec}
\DeclareMathOperator{\pj}{proj}
\DeclareMathOperator{\lt}{lift}
\newcommand{\Vred}{V_\text{red}}
\newcommand{\C}{\mathcal C}
\newcommand{\U}{\mathcal U}
\newcommand{\R}{\mathbb R}
\newcommand{\N}{\mathbb N}
\newcommand{\eqdef}{:=}
\title{Data Structures on Event Graphs\thanks{A preliminary
version appeared as B. Chazelle and W. Mulzer,
\emph{Data Structures on Event Graphs} in Proc.~20th ESA, pp.~313--324, 2012}}
\date{}
\author{
Bernard Chazelle\inst{1}
\and
Wolfgang Mulzer\inst{2}
}
\institute{
Department of Computer Science, Princeton University, USA
\email{chazelle@cs.princeton.edu}
\and
Institut f{\"ur} Informatik, Freie Universit{\"a}t Berlin, Germany
\email{mulzer@inf.fu-berlin.de}
}
\begin{document} \maketitle

\begin{abstract}
We investigate the behavior of 
data structures when the input
and operations are generated by an
\emph{event graph}. 
This model is inspired by 
Markov chains.
We are given a fixed graph $G$,
whose nodes 
are annotated with operations of the type 
\emph{insert}, \emph{delete}, and \emph{query}.
The algorithm responds to the requests 
as it encounters them during a (random or adversarial) walk in $G$.
We study the  limit behavior  of such a 
walk and give an efficient algorithm for recognizing which structures
can be generated.
We also give a near-optimal algorithm for successor searching if
the event graph is a cycle and the walk is adversarial. For a random
walk, the algorithm becomes optimal. 
\end{abstract}

\section{Introduction}\label{introduction}

In contrast with the traditional adversarial assumption
of worst-case analysis,
many data sources are modeled by Markov chains
(e.g., in queuing, speech, gesture, protein homology,
web searching, etc.). These models are very appealing
because they are  widely applicable and simple
to generate.
Indeed, locality of reference, an essential pillar in the
design of efficient computing systems, is often captured by 
a Markov chain modeling the access distribution.
Hence, it does not come as a surprise that 
this connection has motivated and guided much
of the research on self-organizing data 
structures and online algorithms
in a Markov setting~\cite{Chassaing93,Hotz93,KapoorRe91,KarlinPhRa00,
KonnekerVa81,LamLeSi84,PhatarfodPrDy97,SchulzSc96,ShedlerTu72,VitterKr96}.
That body of work should be seen as part of 
a larger effort to understand
algorithms that exploit the fact that
input distributions often exhibit only a small amount
of entropy. This effort is driven not only by the
hope for improvements in practical applications 
(e.g., exploiting coherence in data streams),
but it is also motivated by theoretical questions: for example, the key 
to resolving the problem of designing an optimal
deterministic algorithm for minimum spanning trees lies in
the discovery of an optimal heap for constant-entropy 
sources~\cite{Chazelle00}.
Markov chains have been studied intensively, and there
exists a huge literature on them (e.g.,~\cite{LevinPeWi09}).
Nonetheless, the focus has been on state functions
(such as stationary distribution or commute/cover/mixing times)
rather than on the behavior of complex objects evolving over them.
This leads to a number of fundamental questions which, we hope,
will inspire further research.

Let us describe our model in more detail.
Our object of interest is a structure $\mathcal{T}(X)$ that evolves
over time. The structure $\mathcal{T}(X)$ is defined over a
finite subset $X$ of a universe $\U$.
In the simplest case, we have $\U = \N$  and 
$\mathcal{T}(X) = X$. This corresponds to the classic
dictionary problem where we need to maintain a subset of a 
given universe. We can also imagine more complicated 
scenarios such as $\U = \R^d$ with 
$\mathcal{T}(X)$ being the Delaunay triangulation of $X$.
An \emph{event graph} $G= (V,E)$ specifies restrictions
on the queries and updates that are applied to $\mathcal{T}(X)$.
For simplicity, we assume that $G$ is undirected and connected.
Each node $v \in V$ is associated with an item $x_v\in \U$
and corresponds to one of three possible requests:
(i) \texttt{insert}$(x_v)$;
(ii) \texttt{delete}$(x_v)$; or
(iii) \texttt{query}$(x_v)$.
Requests are specified by following a walk in $G$,
beginning at a designated start node of $G$ and hopping
from node to neighboring node. We consider both \emph{adversarial}
walks, in which the neighbors can be chosen arbitrarily, and
\emph{random} walks, in which the neighbor is chosen uniformly at random.
The latter case corresponds to the classic Markov chain model. 
Let $v^t$ be the node of $G$ visited at time $t$ and 
let $X^{t}\subseteq \U$ be the set of \emph{active elements},
i.e., the set of items inserted prior to time $t$ 
and not deleted after their last insertions.
We also call $X^{t}$ an \emph{active set}.
For any $t>0$, $X^{t}= X^{t-1}\cup \{x_{v^t}\}$
if the operation at $v^t$ is an insertion and
$X^{t}= X^{t-1}\setminus \{x_{v^t}\}$ in the case of deletion.
The query at $v$ depends on the structure under consideration
(successor, point location, ray shooting, etc.).
Another way to interpret the event graph is as a 
finite automaton that generates words over an alphabet with 
certain cancellation rules.

Markov chains are premised on forgetting the past.
In our model, however, the structure
$\mathcal{T}(X^{t})$ can remember quite a bit. In fact, we can
define a secondary graph over the much larger
vertex set $V\times 2^{\U_{|V}}$, 
where $\U_{|V} = \{x_v|\, v\in V\}$ denotes those 
elements in the universe that occur as labels in $G$, see Fig.~\ref{fig:decorated_ex}. 
We call this larger graph the \emph{decorated graph},
$\dec(G)$, since the way to think of this
secondary graph is to picture each node $v$ of $G$ being
``decorated'' with the subsets $X \subseteq \U_{|V}$. 
(We define the vertex set using $2^{\U_{|V}}$
in order to allow for every possible initial subset $X$.)
Let $n$
be the number of nodes in $G$. Since $|\U_{|V}| \leq n$, an edge $(v,w)$
in the original graph gives rise to up to $2^n$ edges 
$(v,X)(w,Y)$ in the decorated graph, 
with $Y$ derived from $X$ in the obvious way.
A trivial upper bound on the number of states is $n2^n$, which is
essentially tight.
If we could afford to store all of $\dec(G)$,
then any of the operations at the nodes of the
event graph could be precomputed and the running time per step would be constant.
However, the required space might be huge,
so the main question is 

\begin{center}
\emph{Can the decorated graph
be compressed with no loss of performance?}
\end{center}

This seems a difficult question to answer in general. 
In fact, even counting the possible active sets in decorated graphs seems
highly nontrivial, as it reduces to counting words 
in regular languages augmented 
with certain cancellation rules. Hence, in this paper we 
focus on basic properties and special cases that highlight
the interesting behavior of the decorated graph.
Beyond the results themselves, the main contribution
of this work is to draw the attention of algorithm designers 
to a more realistic
input model that breaks away from worst-case analysis.

\paragraph{Our Results.}
The paper has two main parts. In the first part, 
we investigate some basic properties of decorated graphs.
We show that the decorated graph $\dec(G)$ has a unique strongly connected
component that corresponds to the limiting phase of a walk
on the event graph $G$, and we give characterizations
for when a set $X \subseteq \U_{|V}$ appears as an active set 
in this limiting phase. We also show that whether $X$ is such an
active set can be decided in linear time (in the size of $G$).

In the second part, we consider the problem of maintaining a dictionary
that supports successor searches during a one-dimensional
walk on a cycle. We show how to achieve linear space and constant 
expected time for a random walk. If the walk is adversarial,
we can achieve a similar result with near-linear storage.
The former result  is in
the same spirit as previous work by the authors on randomized incremental 
construction (RIC) for Markov sources~\cite{ChazelleMu09}. 
RIC is a fundamental algorithmic paradigm
in computational geometry that uses randomness for the construction
of certain geometric objects, and we showed that there is no 
significant loss of efficiency if the  randomness comes 
from a Markov chain with sufficiently high 
conductance.

\begin{figure}
\begin{center}
\includegraphics{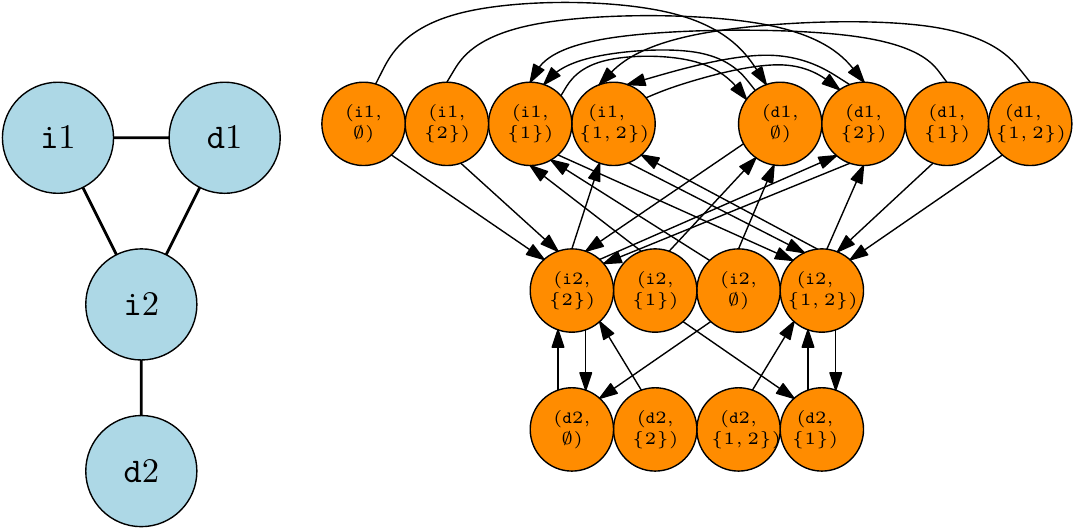}
\end{center}
\caption{An event graph over four vertices and the associated
  decorated graph. Each node of the event graph is replaced by
  four nodes decorated with the subsets of $\{1,2\}$.
}
\label{fig:decorated_ex}
\end{figure}

\section{Basic Properties of Decorated Graphs}

We are given a labeled,
connected, undirected graph $G = (V, E)$. In this section, we 
consider only labels of the form $\texttt{i}x$ and $\texttt{d}x$, 
where $x$ is an element from a finite universe $\U$ and \texttt{i} 
and \texttt{d} stand for \texttt{insert} and 
\texttt{delete}. We imagine an adversary that maintains a subset 
$X \subseteq \U$ while walking on $G$ and performing the corresponding operations
on the nodes. Since the focus of this section is the evolution of
$X$ over time, we ignore queries for now.

Recall that $\U_{|V}$  denotes the elements that appear on the nodes of 
$G$.
For technical convenience, we require that for every 
$x \in \U_{|V}$ there is at least one node labeled \texttt{i}$x$ and
at least one node labeled \texttt{d}$x$. The walk on $G$ is
formalized through the \emph{decorated graph} $\dec(G)$. 
The graph $\dec(G)$ is a directed graph on vertex set 
$V' \eqdef V \times 2^{\U_{|V}}$. The pair $((u, X), (v, Y))$ is an 
edge in $E'$ if and only if $\{u, v\}$ is an edge in $G$ 
and $Y = X \cup \{x_v\}$ or 
$Y = X \backslash \{x_v\}$ depending on whether $v$ is labeled 
\texttt{i}$x_v$ or \texttt{d}$x_v$, see Fig.~\ref{fig:decorated_ex}. 

By a \emph{walk} $W$ in a (directed or undirected) graph, we mean any finite 
sequence of nodes such that the graph contains an edge from each node in
$W$ to its successor in $W$ (in particular, a node may appear multiple 
times in $W$).  Let $A$ be a walk in $\dec(G)$. Recall that the 
nodes in $A$ are tuples, consisting of a node in $G$ and a subset 
of $\U_{|V}$. By taking the first elements of the nodes 
in $A$, we obtain a walk in $G$, the \emph{projection} of $A$, denoted 
by $\pj(A)$. For example, in Fig.~\ref{fig:decorated_ex}, the projection
of the walk 
$(\texttt{i}1, \emptyset), (\texttt{i}2, \{2\}), (\texttt{i}1, \{1,2\}),
(\texttt{d}1, \{2\})$ in the decorated graph is the walk
$\texttt{i}1, \texttt{i}2, \texttt{i}1, \texttt{d}1$ in the event graph. 
Similarly, let $W$ be a walk in $G$ with start node 
$v$, and let $X \subseteq 2^{\U_{|V}}$. Then the \emph{lifting} of $W$ with
respect to $X$ is the walk in $\dec(G)$  that begins at node $(v, X)$ and 
follows the steps of $W$ in $\dec(G)$. We denote this walk by
$\lt(W, X)$.
For example, in Fig.~\ref{fig:decorated_ex}, we have
$\lt((\texttt{i}1, \texttt{i}2, \texttt{i}1, \texttt{d}1), \emptyset) = 
((\texttt{i}1, \emptyset), (\texttt{i}2, \{2\}), (\texttt{i}1, \{1,2\}),
(\texttt{d}1, \{2\}))$.

Since $\dec(G)$ is a directed graph, it can be decomposed into
strongly connected components that induce a directed acyclic
graph $D$. 
We call a strongly connected component of $\dec(G)$ a \emph{sink 
component} (also called essential class in Markov chain theory), 
if it corresponds to a sink 
(i.e., a node with out-degree $0$) in $D$. 
First, we observe that every node of $G$ is represented in each sink
component of $\dec(G)$, see Fig~\ref{fig:decorated_comps}.

\begin{figure}
\begin{center}
\includegraphics{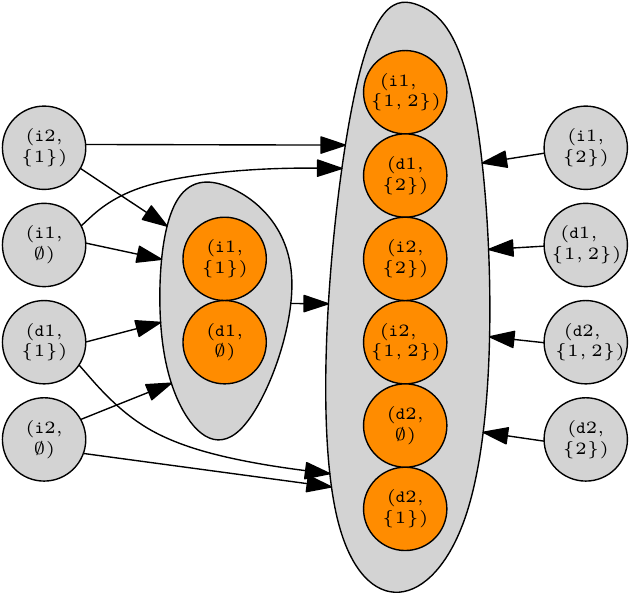}
\end{center}
\caption{The decomposition of the decorated graph from 
  Fig.~\ref{fig:decorated_ex} into strongly connected components.
  There is a unique sink component in which each node from the
  event graph is represented.
}
\label{fig:decorated_comps}
\end{figure}

\begin{lemma}\label{lem:all-first-nodes}
Let $\C$ be a sink component of $\dec(G)$.
For each vertex $v$ of $G$, there exists at least one subset $Y \subseteq \U_{|V}$
such that $(v,Y)$ is a node  in $\C$. In other words, $v$ is the first element
of at least one node in $\C$.
\end{lemma}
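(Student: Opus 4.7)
The plan is to use the connectedness of $G$ together with the key defining property of a sink component: no edges leave it. I would define $S = \{v \in V : \exists Y \subseteq \U_{|V},\ (v, Y) \in \C\}$, i.e., the set of event-graph vertices that appear as the first coordinate of some node in the sink component $\C$. Since $\C$ is nonempty, $S$ is nonempty. Because $G$ is connected, to conclude $S = V$ it suffices to show that $S$ is closed under taking neighbors in $G$: if $v \in S$ and $\{v, u\} \in E$, then $u \in S$.

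To show this closure, I would pick a witness $(v, Y) \in \C$. Since $\{v, u\}$ is an edge of the (undirected) event graph, by the definition of $\dec(G)$ there is an outgoing edge from $(v, Y)$ to $(u, Y')$, where $Y' = Y \cup \{x_u\}$ or $Y' = Y \setminus \{x_u\}$ according to whether $u$'s label is $\texttt{i}x_u$ or $\texttt{d}x_u$. Here one observes that the decoration $Y'$ is completely forced, so such an outgoing edge always exists regardless of what $Y$ happens to be.

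Now comes the step where the sink-component hypothesis is used: every edge leaving a node of $\C$ must remain inside $\C$, for otherwise the SCC-DAG $D$ would have an outgoing edge from $\C$, contradicting that $\C$ is a sink. Therefore $(u, Y') \in \C$, which gives $u \in S$. Combined with the connectivity of $G$ this yields $S = V$, proving the lemma.

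I do not expect any real obstacle here; the only subtlety is remembering that $\dec(G)$ is directed even though $G$ is undirected, so the edge from $(v, Y)$ to $(u, Y')$ need not be paired with a reverse edge having the same decoration on the $v$-side. Fortunately we do not need reversibility — only the existence of one outgoing edge toward each graph-neighbor — and the sink property then captures the rest.
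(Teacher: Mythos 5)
Your proof is correct and takes essentially the same route as the paper: the paper picks any $(w,X)\in\C$, lifts a $G$-walk from $w$ to $v$ starting at $(w,X)$, and notes that the lift stays in $\C$ because $\C$ is a sink. Your neighbor-closure argument is precisely the inductive step of that lifting, spelled out more explicitly.
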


\begin{proof}
Let $(w, X)$ be any node in $\C$. Since $G$ is connected, 
there is a walk $W$ in $G$ from
$w$ to $v$, so $\lt(W, X)$ ends in a node in $\C$ whose first element is
$v$.
\qed\end{proof}

Next, we show that to understand
the  behaviour of a walk on $G$ in the limit, it
suffices to focus on a single sink component of $\dec(G)$.

\begin{lemma}\label{lem:uniqueSink}
In $\dec(G)$ there exists a unique sink component $\C$ such that 
for every node $(v, \emptyset)$ in $\dec(G)$, $\C$ is the only
sink component that $(v, \emptyset)$ can reach. 
\end{lemma}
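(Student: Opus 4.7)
I would prove the stronger statement that $\dec(G)$ has a unique sink component $\C$; the lemma then follows because every node of the finite digraph $\dec(G)$ reaches some sink component, so each $(v,\emptyset)$ must reach $\C$.

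The key tool is a \emph{confluence property} for insert/delete operations. Fix any walk $W = v, w_1, \dots, w_\ell$ in $G$ and lift it from two different starting states $(v, Y_1)$ and $(v, Y_2)$. The operation at $w_i$ is either $Y \mapsto Y \cup \{x_{w_i}\}$ or $Y \mapsto Y \setminus \{x_{w_i}\}$, and crucially it depends only on $w_i$, not on the current set. After this step both lifted sets agree on whether $x_{w_i}$ is present, while membership of every other element is unchanged. Hence the symmetric difference of the two active sets is monotonically non-increasing along $W$, and any label $x_{w_i}$ that is visited is permanently removed from it.

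Now suppose, for contradiction, that $\C_1$ and $\C_2$ are two distinct sink components. Pick any vertex $v \in V$; by Lemma~\ref{lem:all-first-nodes} there exist sets $Y_1, Y_2$ with $(v, Y_i) \in \C_i$. Let $D \eqdef Y_1 \triangle Y_2 \subseteq \U_{|V}$, and for each $z \in D$ choose a node $v_z$ of $G$ labeled with $z$; such a node exists by the standing assumption that every element of $\U_{|V}$ appears as both an insert and a delete label. Using connectedness of $G$, build a walk $W$ starting at $v$ that visits every $v_z$, and let $u$ be its endpoint. Lifting $W$ from $(v, Y_1)$ and $(v, Y_2)$ produces walks ending at $(u, Z_1)$ and $(u, Z_2)$; by the confluence observation every element of $D$ has been eliminated from the symmetric difference, so $Z_1 = Z_2$. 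Since each $\C_i$ is a sink component, the lifted walk starting in $\C_i$ stays inside $\C_i$, so the common endpoint $(u, Z_1) = (u, Z_2)$ lies in both $\C_1$ and $\C_2$, forcing $\C_1 = \C_2$ and contradicting our assumption.

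Hence $\dec(G)$ has exactly one sink component, which is the $\C$ claimed by the lemma. The only nontrivial ingredient is isolating the confluence property; the rest is a routine combination of that observation with Lemma~\ref{lem:all-first-nodes}, connectedness of $G$, and the label assumption on $\U_{|V}$.
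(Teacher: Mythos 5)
Your proof is correct, and it takes a genuinely different route from the paper's. The paper proves the lemma in two stages, both anchored at the empty set: first it shows that a single node $(v,\emptyset)$ cannot reach two different sink components, by lifting the concatenations $WW'W$ and $W'W'W$ from $(v,\emptyset)$ and observing that membership in the final set is governed by the last occurrence of each label; then it shows that $(v,\emptyset)$ and $(w,\emptyset)$ reach the same sink, by a similar trick with $WW^RWW'$ and $W^RWW'$. Uniqueness of the sink component over \emph{all} of $\dec(G)$ is only established afterwards, as a corollary of the characterization in Lemma~\ref{lem:C-characterization}. You instead isolate the confluence property --- that lifting the same walk from any two states only shrinks the symmetric difference of the active sets, erasing every label the walk touches --- and apply it directly to two nodes $(v,Y_1)\in\C_1$, $(v,Y_2)\in\C_2$ over a walk that visits a representative of each element of $Y_1\triangle Y_2$. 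This yields the stronger conclusion (a unique sink component, full stop) in one step, needing only Lemma~\ref{lem:all-first-nodes} and connectedness, and the stated lemma drops out because every node of a finite digraph reaches at least one sink component. The paper's ``last occurrence'' argument is really confluence specialized to a common start state $\emptyset$; your formulation makes the mechanism explicit and reusable, and it subsumes the paper's later corollary rather than deferring it. The one small thing worth noting is that you invoke the assumption that every $x\in\U_{|V}$ has both an insert and a delete node, but you only actually need some node labeled with $x$, which holds by definition of $\U_{|V}$ --- a harmless over-hypothesis.
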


\begin{proof}
Suppose there is a node $v$ in $G$ such that  $(v, \emptyset)$ can reach two 
different sink components $\C$ and $\C'$ in $\dec(G)$. 
By Lemma~\ref{lem:all-first-nodes},
both $\C$ and $\C'$ must contain at least one node with first
element $v$. Call these nodes $(v,X)$ (for $\C$) and
$(v, X')$ (for $\C'$).
Furthermore, by assumption $\dec(G)$ contains a walk $A$ from 
$(v, \emptyset)$ to $(v,X)$ and a walk $A'$ from 
$(v, \emptyset)$ to $(v, X')$. 
Let $W \eqdef \pj(A)$ and $W' \eqdef \pj(A')$. Both $W$ and $W'$ are
closed walks in $G$ that start and end in $v$,
so their concatenations $WW'W$ and $W'W'W$ are
valid walks in $G$, again with start and end vertex $v$.
Consider the lifted walks 
$\lt(WW'W, \emptyset)$ and $\lt(W'W'W, \emptyset)$ in
$\dec(G)$. We claim that these two walks have the same end node $(v, X'')$. 
Indeed, for each $x \in \U_{|V}$, whether $x$ appears in $X''$ or not 
depends solely on whether the label \texttt{i}$x$ or the label \texttt{d}$x$
appears last on the original walk in $G$. This is the same
for both $WW'W$ and $W'W'W$. Hence,  $\C$ and $\C'$ must both contain
$(v, X')$, a contradiction to the assumption that they are distinct sink
components. 
Thus, each node $(v, \emptyset)$ can reach exactly one sink component. 

Now consider two distinct nodes $(v, \emptyset)$ and $(w, \emptyset)$
in $\dec(G)$ and assume that they reach the sink components $\C$ and $\C'$,
respectively. Let $W$ be a walk in $G$ that 
goes from  $v$ to $w$ and let $W' \eqdef \pj(A)$, where $A$ is 
a walk in $\dec(G)$ that 
connects $w$ to $\C'$. Since $G$ is undirected, the reversed walk
$W^R$ is a valid walk in $G$ from $w$ to $v$. 
Now consider the walks $Z_1 \eqdef WW^RWW'$ and $Z_2 \eqdef W^RWW'$. 
The walk $Z_1$ begins in
$v$, the  walk $Z_2$ begins in $w$, and they both have the same end node. 
Furthermore, for each $x \in \U_{|V}$, the label \texttt{i}$x$ appears last
in $Z_1$ if and only if it appears last in $Z_2$.
Hence, the lifted walks $\lt(Z_1, \emptyset)$ and $\lt(Z_2, \emptyset)$ 
have the  same end node in $\dec(G)$, so $\C = \C'$.  The lemma
follows.
\qed\end{proof}

Since the unique sink component $\C$ 
from Lemma~\ref{lem:uniqueSink} represents the
limit behaviour of the set $X$ during a walk in $G$, we will
henceforth focus on this component.
Let us begin with a few properties of $\C$.
First, we characterize the nodes in $\C$.
\begin{lemma}\label{lem:C-characterization}
Let $v$ be a node of $G$ and $X \subseteq \U_{|V}$. We have
$(v, X) \in \C$ if and only if there exists a closed walk $W$ in $G$ 
with the following properties:
\begin{enumerate}
  \item the walk $W$ starts and ends in $v$:
  \item for each $x \in \U_{|V}$, there is at least one node in $W$ with
     label \texttt{i}$x$ or \texttt{d}$x$;
  \item we have $x \in X$ if and only  if the last node in $W$ referring 
  to $x$ is an insertion and $x \not\in X$ if and only if the last node 
	in $W$ referring to $x$ is a deletion.
\end{enumerate}
\end{lemma}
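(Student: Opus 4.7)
The plan is to prove both directions of the equivalence, with the harder work happening in the reverse direction where we must certify membership in the \emph{sink} component $\C$, not merely membership in some strongly connected component.

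For the forward direction, I will assume $(v,X) \in \C$ and construct $W$ explicitly. Using Lemma~\ref{lem:all-first-nodes} applied to $\C$, for every $x \in \U_{|V}$ I can pick a node $v_x$ of $G$ labeled either $\texttt{i}x$ or $\texttt{d}x$ (such a node exists in $G$ by the standing assumption of this section) such that $(v_x, Y_x) \in \C$ for some $Y_x$. Since $\C$ is strongly connected, I will stitch together walks inside $\C$ to form a closed walk $A$ starting and ending at $(v,X)$ that passes through each chosen $(v_x,Y_x)$. Letting $W \eqdef \pj(A)$, condition~1 is automatic from $A$ being closed at $(v,X)$, and condition~2 follows from the visits to the $v_x$'s. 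For condition~3, I will observe that because $A$ is closed at $(v,X)$ and condition~2 guarantees every element is touched, the state after $W$ is $X$ irrespective of the starting state; the final state is then entirely determined by the last touch of each $x$, which yields condition~3.

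For the reverse direction, I will assume $W$ satisfies (1)--(3) and show $(v,X) \in \C$. The key observation is that conditions~2 and~3 together imply that the lift $\lt(W, Y)$ ends at $(v,X)$ for \emph{every} starting state $Y \subseteq \U_{|V}$, since each $x$ is touched in $W$ and the last touch forces its membership in the final state according to condition~3. Now I will pick any node $(w, Z) \in \C$; by Lemma~\ref{lem:all-first-nodes} and strong connectivity of $\C$, there is a walk inside $\C$ from $(w,Z)$ to some node of the form $(v, Y)$. Appending $\lt(W, Y)$ to this walk gives a walk inside $\dec(G)$ from a vertex of $\C$ to $(v,X)$. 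Since $\C$ is a sink component in the DAG of strongly connected components, every node reachable from $\C$ must also lie in $\C$, forcing $(v,X) \in \C$.

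The main subtlety, and the step I expect to be the main obstacle, is the reverse direction: just exhibiting a closed walk at $(v,X)$ in $\dec(G)$ only places $(v,X)$ in some strongly connected component, not necessarily in the sink $\C$. The ``erasing the initial state'' property of $W$ (guaranteed by conditions~2 and~3) is what lets me leverage Lemma~\ref{lem:all-first-nodes} to route from an arbitrary node of $\C$ into $(v,X)$, thereby using the sink property of $\C$ to conclude. The forward direction is essentially bookkeeping once Lemma~\ref{lem:all-first-nodes} is invoked.
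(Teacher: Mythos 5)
Your proof is correct and follows essentially the same strategy as the paper's: both directions hinge on Lemma~\ref{lem:all-first-nodes}, the ``erasing the initial state'' observation that conditions~2 and~3 make $\lt(W,Y)$ end at $(v,X)$ for any $Y$, and the sink property of $\C$ to conclude membership. The only cosmetic differences are that the paper's forward direction takes $W$ to be the projection of a closed walk visiting \emph{all} of $\C$ (rather than stitching through one targeted node per element), and its reverse direction lifts $W$ directly from a node $(v,Y)\in\C$ rather than routing through an auxiliary $(w,Z)$.
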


\noindent
We call the walk $W$ from Lemma~\ref{lem:C-characterization}
a \emph{certifying walk} for the node $(v, X)$ of $\C$.
For example, as we can see in Fig.~\ref{fig:decorated_comps},
the sink component of our example graph contains the node
$(\texttt{d}2, \{2\})$. A certifying walk for this node is
$\texttt{d}2, \texttt{i}2, \texttt{d}1,\texttt{i}2, \texttt{d}2$.
\begin{proof}
First, suppose there is a walk with the given properties.
By Lemma~\ref{lem:all-first-nodes}, there is at least one  node 
in $\C$ whose first element is $v$, say $(v, Y)$. 
The properties of $W$ immediately imply that 
the walk $\lt(W, Y)$ ends in $(v, X)$, which proves the 
``if''-direction of the lemma.

Now suppose that $(v,X)$ is a node in $\C$. Since $\C$ is strongly connected,
there exists a closed walk $A$ in $\C$ that starts and ends at $(v,X)$ and
visits every node of $\C$ at least once. Let $W \eqdef \pj(A)$. 
By Lemma~\ref{lem:all-first-nodes} and our assumption on the labels
of $G$, the walk $W$ contains for every element $x \in \U_{|V}$ 
at least one node
with label \texttt{i}$x$ and one node with label \texttt{d}$x$.
Therefore, the walk $W$ meets all the desired properties.
\qed\end{proof}

This characterization of the nodes in $\C$ immediately implies
that the decorated graph can have only one sink component.

\begin{corol}
The component $\C$ is the only sink component of $\dec(G)$.
\end{corol}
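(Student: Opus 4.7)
The plan is to reuse the characterization from Lemma~\ref{lem:C-characterization} almost verbatim, observing that its ``only if'' direction actually worked for any sink component, not specifically for $\C$. So the proof will be short: take an arbitrary sink component $\C'$, pick any node $(v,X)\in\C'$, and show that $(v,X)$ must also lie in $\C$; since distinct sink components are disjoint, this forces $\C'=\C$.

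Concretely, I would first invoke Lemma~\ref{lem:all-first-nodes} applied to $\C'$, which guarantees that every vertex of $G$ appears as the first coordinate of some node of $\C'$. Since $\C'$ is strongly connected, there is a closed walk $A$ in $\C'$ starting and ending at $(v,X)$ that visits every node of $\C'$ at least once. Let $W \eqdef \pj(A)$. Then $W$ is a closed walk in $G$ from $v$ to $v$, and since every vertex of $G$ shows up in $A$ and (by the standing assumption) every element $x\in\U_{|V}$ has at least one node labeled $\texttt{i}x$ and at least one node labeled $\texttt{d}x$, the walk $W$ satisfies properties (1) and (2) of Lemma~\ref{lem:C-characterization}. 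Property (3) holds automatically: the set of active elements after following the lifted walk starting at $(v,X)$ must again be $X$, because $A$ is a closed walk in $\dec(G)$, and this is exactly the condition that the last reference to each $x\in\U_{|V}$ along $W$ agrees with membership in $X$.

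Thus $W$ is a certifying walk for $(v,X)$ in the sense of Lemma~\ref{lem:C-characterization}, which gives $(v,X)\in\C$. Since the strongly connected components of $\dec(G)$ partition its vertex set, $\C'\cap\C\neq\emptyset$ implies $\C'=\C$.

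The only conceivable obstacle is verifying property (3), and even there no real work is needed: it is immediate from the fact that $A$ is a closed walk in $\dec(G)$, so the final decoration must coincide with the initial decoration $X$, and the decoration after a walk is determined precisely by the last occurrence of each element's label along the projection. No further case analysis is required, so the corollary follows directly.
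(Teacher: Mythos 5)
Your proof is correct. The one point that needs care is that Lemma~\ref{lem:C-characterization} is stated specifically for $\C$, so you cannot invoke its ``only if'' direction for $\C'$ as a black box; but you sidestep this properly by re-running that argument for $\C'$ explicitly: Lemma~\ref{lem:all-first-nodes} (which is stated for an arbitrary sink component) gives that $\C'$ covers every vertex of $G$, the closed tour $A$ of $\C'$ projects to a closed walk $W$ hitting all of $\U_{|V}$, and the closed-walk condition forces property~(3). Applying the ``if'' direction then places $(v,X)$ in $\C$, and disjointness of strongly connected components finishes. This is a genuinely different route from the paper's: the paper instead shows that \emph{every} node $(v,X)$ of $\dec(G)$ can reach $\C$, by taking a certifying walk $W$ for some $(v,Y)\in\C$ (Lemmas~\ref{lem:all-first-nodes} and~\ref{lem:C-characterization}) and observing that $\lt(W,X)$ also ends at $(v,Y)$ because $W$ already overwrites every element's status. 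The paper's argument uses both lemmas purely as black boxes and is a line shorter; yours re-derives part of Lemma~\ref{lem:C-characterization}'s proof but makes explicit why the characterization only mentions a single sink component. Both are sound and short.
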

\begin{proof}
Let $(v, X)$ be a node in $\dec(G)$. By Lemmas~\ref{lem:all-first-nodes} 
and~\ref{lem:C-characterization}, there exists in $\C$ a node 
of the form $(v, Y)$ and a
corresponding certifying walk $W$. Clearly, the walk $\lt(W, X)$ ends 
in $(v, Y)$. Thus, every node in $\dec(G)$ can reach $\C$, so there
can be no other sink component. 
\qed\end{proof}

Next, we give a bound on the length of certifying walks, from
which we can deduce a bound on the diameter of $\C$.
\begin{theorem}
Let $(v, X)$ be a node of $\C$ and let $W$ be a corresponding certifying walk 
of minimum length. Then $W$ has length at most $O(n^2)$, where $n$ denotes 
the number of nodes in $G$. There are examples where any certifying walk
needs $\Omega(n^2)$ nodes. It follows that $\C$ has diameter $O(n^2)$
and that this is tight.
\end{theorem}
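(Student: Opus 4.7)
My plan is to establish the upper bound $O(n^2)$ by an explicit construction leveraging the (possibly long) certifying walk guaranteed by Lemma~\ref{lem:C-characterization}, then to prove the matching lower bound $\Omega(n^2)$ via a ``theta graph'' family, and finally to derive the diameter bounds.

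For the upper bound, let $W^*$ be any certifying walk for $(v, X)$. For each $y \in \U_{|V}$, let $t_y$ be the time of the last $y$-op visit in $W^*$ and $p_y$ the corresponding (necessarily correct-op) node, and sort the elements by increasing $t_y$ into $y_1, \ldots, y_m$. Since the suffix of $W^*$ after time $t_{y_j}$ contains no $y_k$-op for any $k \le j$, reversing that suffix and replacing it by a shortest path in the subgraph of $G$ obtained by deleting every $y_k$-op node ($k \le j$) except $p_{y_j}$ itself yields a walk $\rho_{y_j}$ from $v$ to $p_{y_j}$ of length at most $n-1$ that avoids every $y_k$-op for $k \le j$ except its endpoint $p_{y_j}$. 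I propose the certifying walk
\[
W \;=\; \rho_{y_1}\rho_{y_1}^R\,\rho_{y_2}\rho_{y_2}^R\,\cdots\,\rho_{y_m}\rho_{y_m}^R,
\]
whose total length is at most $2m(n-1) = O(n^2)$. The main obstacle is verifying property~(3). The crucial point, which I would prove carefully, is that for any $y_i$ no round trip $\rho_{y_j}\rho_{y_j}^R$ with $j > i$ visits a $y_i$-op node (since $\rho_{y_j}$ avoids $y_i$-ops by construction whenever $i \le j$), while round trip $i$ itself visits exactly one $y_i$-op, namely $p_{y_i}$, at the turn-around. Hence the last $y_i$-op in $W$ is $p_{y_i}$, which by choice matches $X$; property~(2) is immediate from the visits to each $p_{y_j}$.

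For the lower bound, I use a theta graph on $n = 2m+2$ nodes: two paths $v = v_0 - v_1 - \cdots - v_m$ and $v = w_0 - w_1 - \cdots - w_m$ meeting at $v$, plus an auxiliary node $u$ adjacent to $v$. Label $v$ with \texttt{i}$0$ and $u$ with \texttt{d}$0$; for each $i \ge 1$, label $v_i$ with \texttt{i}$i$ and $w_i$ with \texttt{d}$i$ when $i$ is odd, swapping the labels when $i$ is even. With target $X = \{0, 1, \ldots, m\}$, the correct insert-node for element $i \ge 1$ lies on the $V$-arm when $i$ is odd and on the $W$-arm when $i$ is even. The hardest step will be to show that any certifying walk must contain, in chronological order, a $V$-arm excursion reaching depth at least $m$, then a $W$-arm excursion reaching depth at least $m-1$, then a $V$-arm excursion reaching depth at least $m-2$, and so on down to depth at least $1$. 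I plan to prove this by descending induction on $i$, using that the ``winning'' arm at depth $i$ (the one whose latest visit to depth $\ge i$ is later in time) must alternate between $V$ and $W$. Each such excursion has length at least twice its depth, summing to $2(m + (m-1) + \cdots + 1) = \Omega(n^2)$.

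Finally, the diameter claim follows in both directions. The $O(n^2)$ upper bound is obtained by joining any two nodes of $\C$ via a shortest $G$-path between first coordinates (at most $n-1$ steps, staying in $\C$ since $\C$ is a sink) followed by a round-trip construction that adjusts only those decoration entries that still disagree, in $O(n^2)$ further steps. For tightness, a symmetric argument in the theta graph shows that $(v, \{0\}) \in \C$, and any walk in $\dec(G)$ from $(v, \{0\})$ to $(v, X)$ projects to a walk in $G$ from $v$ to $v$ that must flip the state of every element $1, \ldots, m$ with matching last op, which by the lower-bound argument requires $\Omega(n^2)$ steps.
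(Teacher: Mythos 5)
Your proposal is correct, but both the upper-bound argument and the lower-bound family differ in genuine ways from the paper's proof. For the $O(n^2)$ upper bound, the paper reverses a \emph{minimum-length} certifying walk and partitions it into phases (one per newly encountered element), using minimality to argue each phase is a shortest walk of length at most $n$; you instead take an arbitrary certifying walk $W^*$ and explicitly \emph{synthesize} a short certifying walk $W$ out of round trips $\rho_{y_j}\rho_{y_j}^R$, one per element in order of last appearance in $W^*$. Both hinge on the same invariant (the suffix after the last $y_j$-visit avoids all earlier elements' operation nodes), but your version is constructive and does not need minimality of $W$. One small point worth writing out: the subgraph in which $\rho_{y_j}$ is sought must contain both $v$ and $p_{y_j}$; this holds because the suffix of $W^*$ lies in that subgraph, and (when $v$ itself is labeled) the element of $v$ must be $y_m$, so $p_{y_m}=v$ and the degenerate round trip is the empty walk. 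For the $\Omega(n^2)$ lower bound, the paper uses a path of length $2m+1$ with a target set of odd numbers; you use a tripod (two arms of length $m$ from $v$ plus a pendant $u$) with a parity-flipping labeling. The combinatorial heart is identical: the last-correct-op times $T_m < T_{m-1} < \cdots < T_1$ must be in this order, forcing the walk to shuttle between the two arms at decreasing depths, giving $\sum_i (2i-1) = \Omega(m^2)$; your ``winning arm alternates'' induction is exactly this. Note you should fix the parity of $m$ (say $m$ odd) so the first excursion is indeed to the $V$-arm. For the diameter, your upper bound phrasing (``adjust only the disagreeing entries'') is slightly more refined than the paper's (which simply prepends a shortest $u$-to-$v$ path to a full certifying walk for $(v,Y)$ and lifts), but both give $O(n^2)$, and your tightness argument correctly reuses the lower-bound family.
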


\begin{proof}
Consider the reversed walk $W^{R}$. We subdivide $W^{R}$ into \emph{phases}: 
a new phase starts when $W^{R}$ encounters a node labeled 
\texttt{i}$x$ or \texttt{d}$x$ for an $x \in \U_{|V}$ that it has not seen 
before. Clearly, the number of phases is at most $n$. 
Now consider the $i$-th phase and let $V_i$ be the set of nodes in $G$ 
whose labels refer to the $i$ distinct elements of $\U_{|V}$ that have been
encountered in the first $i$ phases. In phase $i$, the walk $W^{R}$ 
can use only vertices in $V_i$. Since $W$ has minimum cardinality, 
the phase must consist of a shortest walk in $V_i$ from the first node of
phase $i$ to the first node of phase $i+1$.
Hence, each phase consists of at most $n$ vertices and
the length of $W$ is $O(n^2)$.

We now describe the lower bound construction. 
Let $m \geq 2$ be an integer.
The event graph $P$  is a path with $n = 2m + 1$ vertices. The first $m$ 
vertices are labeled
$\texttt{i}m, \texttt{i}(m-1), \dots \texttt{i}1$, in this order. 
The middle vertex is labeled $\texttt{d}m$, and the remaining 
$m$ vertices are labeled $\texttt{d}1, \texttt{d}2, \dots, \texttt{d}m$, in this
order, see Fig.~\ref{fig:large-diameter}. 
Let $v$ be the middle vertex of $P$ and
$\C$ be the unique sink component of $\dec(P)$.
First, note that $(v, X)$ is a node of $\C$ for every 
$X \subseteq \{1, \ldots, m-1\}$. Indeed, given $X \subseteq \{1, \ldots, m-1\}$,
we can construct a certifying walk for $X$ as follows:
we begin at $v$, and for $k = m-1, m-2, \dots, 1$, we walk from $v$ to
$\texttt{i}k$ or $\texttt{d}k$, depending on whether $k$ lies in $X$ or 
not, and back to $v$. This gives a certifying walk for $X$  with 
$2(m-1) + 2(m-2) + \dots + 2 = \Theta(m^2)$
steps. 
Now, we claim that the length of a shortest certifying walk for the node 
$\left(v, 
\left\{2k + 1 \mid k = 0, \ldots, \lfloor m/2 \rfloor - 1 \right\}\right)$
is $\Theta(m^2) = \Theta(n^2)$. Indeed, note that the set 
$ Y = \left\{2k + 1 \mid k = 0, \ldots, \lfloor m/2 \rfloor - 1 \right\}$ 
contains exactly the odd numbers between $1$ and $m-1$. Thus, a certifying 
walk for $Y$ must visit the node $\texttt{i}1$ after all visits to
node $\texttt{d}1$, the node $\texttt{d}2$ after all visits to 
$\texttt{i}2$, etc. Furthermore, the structure of $P$ dictates that 
any certifying walk performs these visits in order from largest 
to smallest, i.e., first comes the last visit to the node for
$m-1$, then the last visit to the node for $m-2$, etc. 
To see this, suppose that there exist $i < j$
such that the last visit to the node for $i$, $w_i$, comes before 
the last visit to the node for $j$, $w_j$. Then the parity of $i$ 
and $j$ must differ, because otherwise the walk must cross $w_i$ 
on the way from $w_j$ to $v$. However, in this case, on the way 
from $w_j$ to $v$, the certfying walk has to cross the node with the 
wrong label for $i$ (\texttt{insert} instead of \texttt{delete}, or vice 
versa), and hence it could not be a certifying walk. It follows 
that any certifying walk for $(v,Y)$ has length $\Omega(n^2)$.

\begin{figure}
\begin{center}
\includegraphics[scale=0.85]{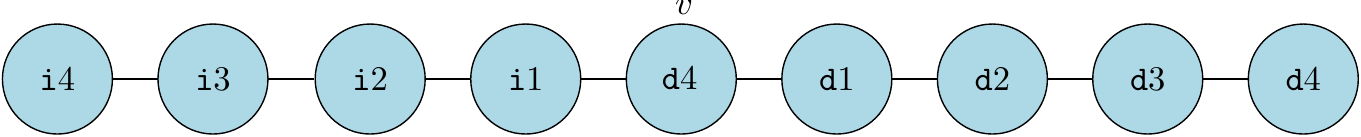}
\end{center}
\caption{The lower bound example for $m = 4$. The
shortest certifying walk for $(v, \{1,3\})$ goes from $v$ to
$\texttt{i}3$, then to $\texttt{d}2$, then to $\texttt{i}1$, and then back to
$v$.}
\label{fig:large-diameter}
\end{figure}

We now show that any two nodes in $\C$ are connected by a walk of
length $O(n^2)$. 
Let $(u, X)$ and  $(v, Y)$ be two such nodes and let $Q$ be a 
shortest walk from $u$ to $v$ in $G$ and
$W$ be a certifying walk for $(v, Y)$. Then $\lt(QW, X)$ is a walk of length
$O(n^2)$ in $\C$ from $(u, X)$ to $(v, Y)$. Hence, the diameter of
$\C$ is $O(n^2)$. Again, the lower bound example from the previous
paragraph applies: 
the length of a shortest walk in $\C$ between $(v, \emptyset)$ and 
$\left(v, 
\left\{2k +1 \mid k = 0, \ldots, \lfloor m/2 \rfloor - 1 \right\}\right)$ 
is $\Theta(n^2)$, as can be seen by an argument similar to the argument for
the shortest certifying walk.
\qed\end{proof}

Next, we describe an algorithm 
that is given $G$, a node $v \in V$, and a set $X \subseteq \U_{|V}$ and
then decides whether $(v,X)$ is a node of the unique sink or not.
For $W \subseteq V$, let $\U_{|W}$ denote the elements that appear in
the labels of the nodes in $W$. 
For $U \subseteq \U$, let $V_{|U}$ denote the nodes of $G$ whose
labels contain an element of $U$. 

\begin{theorem}\label{thm:decide}
Given an event graph $G$, a node $v$ of $G$ and a subset
$X \subseteq \U_{|V}$, we can decide in $O(|V|+|E|)$ steps
whether $(v,X)$ is a node of the unique sink component $\C$
of $\dec(G)$.
\end{theorem}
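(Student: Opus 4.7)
The plan is to translate the decision problem into a reachability problem on $G$ and solve it by an incremental BFS. By Lemma~\ref{lem:C-characterization}, $(v,X)\in\C$ iff there is a closed walk in $G$ starting at $v$ whose last node referring to each $x\in\U_{|V}$ bears the label $\texttt{i}x$ when $x\in X$ and $\texttt{d}x$ otherwise. Reversing such a walk, the equivalent condition is the existence of a closed walk starting at $v$ in which, for every $x\in\U_{|V}$, the \emph{first} node referring to $x$ bears that correct label. Call a node $u$ of $G$ \emph{good} if its label is the correct one for $x_u$, and \emph{bad} otherwise. Such a reversed walk exists iff (i) $v$ itself is good, and (ii) from $v$ one can reach a good-for-$x$ node for every $x\in\U_{|V}$, subject to the constraint that a bad-for-$x$ node may be traversed only after some good-for-$x$ node has already been visited.

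To verify (ii) in linear time, I maintain a set $U\subseteq\U_{|V}$ of released elements (initially empty) and call a node $u$ \emph{safe} when it is good or when $x_u\in U$. After checking that $v$ is good (else reject), I run BFS from $v$ with the following twist: when scanning the neighbors of a dequeued node, a safe, unvisited neighbor is enqueued as usual, while an unsafe neighbor $w$ is appended to a waiting list $L_{x_w}$. Whenever a newly dequeued good node $u$ satisfies $x_u\notin U$, I add $x_u$ to $U$ and move all of $L_{x_u}$ into the queue. The procedure accepts iff $U=\U_{|V}$ at termination.

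Correctness reduces to two short inductions over the order $x_1=x_v, x_2, \ldots$ in which elements enter $U$. For soundness, at the moment $x_i$ is released the BFS has produced a path from $v$ to a good-for-$x_i$ node through nodes safe with respect to $\{x_1,\ldots,x_{i-1}\}$ (by monotonicity of the safe subgraph); concatenating these out-and-back excursions and closing the walk through the now-completely-safe graph yields a certifying walk in the sense of Lemma~\ref{lem:C-characterization}. For completeness, a reversed walk witnessing $(v,X)\in\C$ first-touches its elements in some order $y_1=x_v, y_2, \ldots$, and the prefix leading to the first touch of $y_{i+1}$ lives in the safe subgraph for $\{y_1,\ldots,y_i\}$, so an induction on $i$ shows that each $y_i$ is eventually placed into $U$. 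For the running time, each node passes through the states new $\to$ waiting $\to$ queued $\to$ done at most once, each waiting list $L_x$ is activated at most once, and each edge is scanned $O(1)$ times, for a total of $O(|V|+|E|)$. The main point that needs care is the amortized bookkeeping of waiting lists: one must ensure that a node is placed on some $L_x$ at most once and that each $L_x$ is released exactly when $x$ enters $U$, never repeatedly and never prematurely.
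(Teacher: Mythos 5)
Your algorithm is the same modified BFS as in the paper's proof, with ``good''/``bad''/``safe'' playing the role of the paper's ``blue''/``red'' coloring, your waiting lists $L_x$ matching the paper's red fringes $R_x$, and the release-on-first-good-dequeue rule appearing verbatim; your acceptance test $U=\U_{|V}$ is equivalent to the paper's test that all red fringes are empty at termination. Your soundness argument (stitch together the BFS excursions in the order of releases, then reverse to get a certifying walk via Lemma~\ref{lem:C-characterization}) also matches the paper's. The one genuine variation is the completeness direction: you prove the contrapositive directly, by induction on the order in which the reversed certifying walk first touches each element, showing that every such element is eventually released because the relevant prefix lies in the safe subgraph accumulated so far. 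The paper instead argues by contradiction, observing that if the algorithm halts with nonempty red fringes, their union is a separator whose label set is disjoint from that of the explored region, and any hypothetical certifying walk would have to cross that separator at a node of the wrong color. Both are correct; your induction is a bit more self-contained, while the paper's cut argument is what yields the structural separator characterization it records as a corollary immediately after Theorem~\ref{thm:decide}.
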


\begin{proof}
The idea of the algorithm is to construct a certifying walk 
for $(v, X)$ through a modified breadth first search. 

In the preprocessing phase, we color a vertex $w$ of $G$ \emph{blue} if $w$ is 
labeled \texttt{i}$x$ and $x \in X$, or if $w$ is labeled
\texttt{d}$x$ and $x \not\in X$. Otherwise, we color $w$ \emph{red}. 
If $v$ is colored red, then $(v, X)$ cannot be in $\C$, and we are done. 
Otherwise, we perform 
a directed breadth first search that starts from $v$ and tries to
construct a reverse certifying walk.
Our algorithm maintains several queues.
The main queue is called the \emph{blue fringe} $B$. Furthermore,
for every $x \in \U_{|V}$, we have a queue $R_x$, the \emph{red fringe}
for $x$. At the beginning, the queue $B$ contains only $v$, and all
the red fringes are empty.

The main loop of the algorithm takes place while $B$ is not
empty. We pull the next node $w$ out of $B$, and we process
$w$ as follows: if we have not seen the element $x_w \in \U_{|V}$
for $w$ before, we color the set $V_{|\{x_w\}}$ of all nodes 
whose label refers to $x_w$ blue, append all the nodes of 
$R_{x_w}$ to $B$, and we delete $R_{x_w}$. Next, 
we process the neighbors of $w$ as follows: if a neighbor $w'$ of $w$ is blue,
we append it to $B$ if $w'$ has not been inserted into $B$ before. If $w'$ 
is red and labeled with the element $x_{w'}$, we append $w'$ to $R_{x_{w'}}$, 
if necessary, see Fig.~\ref{fig:bfs_ex}. 

\begin{figure}
\begin{center}
\includegraphics[scale=0.9]{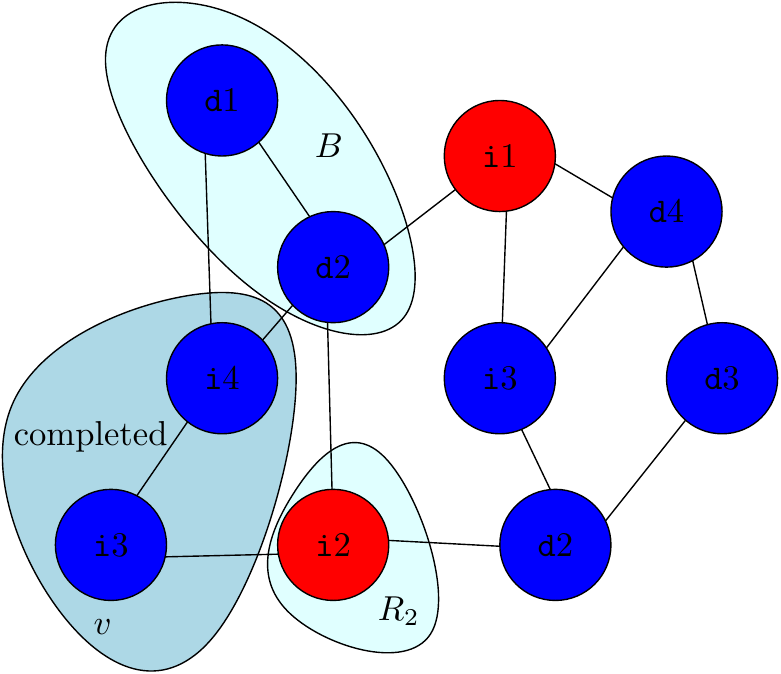}
\end{center}
\caption{An intermediate stage of the algorithm while deciding whether the 
  node $(v,\{3,4\})$ lies in the unique sink of the given event graph. At 
  this point, the nodes $v$ and $\texttt{i}4$ have been processed.
  Since the elements $3$ and $4$ have been encountered, the corresponding
  nodes have been colored blue. The nodes for the other elemenents still
  have the original color. We have $B = \{\texttt{d}1,
  \texttt{d}2\}$, $R_2 = \{\texttt{i}2\}$, and 
  $R_1 = R_3 =  R_4 = \emptyset$.  Suppose that in the next step,
  the algorithm processes $\texttt{d}2$. Then the node $\texttt{i}2$
  is colored blue and added to $B$, and $\texttt{i}1$ is added to
  $R_1$.
}
\label{fig:bfs_ex}
\end{figure}

The algorithm terminates after at
most $|V|$ iterations. In each iteration, the cost is proportional 
to the degree of the current vertex $w$ and (possibly) the size of 
one red fringe. The latter
cost can be charged to later rounds, since the nodes of the red fringe 
are processed later on. Let $\Vred$ be the union of the
remaining red fringes after the algorithm terminates.

If $\Vred = \emptyset$, 
we obtain a certifying walk for $(v, X)$ 
by walking from one newly discovered vertex to the next inside the current
blue component and reversing the walk. Now suppose $\Vred \neq \emptyset$. 
Let $A$ be the set of all vertices that were traversed during the BFS. Then
$G \setminus \Vred$ has  at least two connected
components (since there must be blue vertices outside of $A$). Furthermore, 
$\U_{|A} \cap \U_{|\Vred} = \emptyset$.
We claim that a certifying walk for $(v, X)$ cannot exist. Indeed, suppose
that $W$ is such a certifying walk. Let $x_w \in \U_{|\Vred}$ be the element
in the label of the last node $w$ in $W$ whose label refers 
to an element in $\U_{|\Vred}$. Suppose that the label of $w$ is of the form
\texttt{i}$x_w$; the other case is symmetric. Since $W$ is a
certifying walk, we have $x_w \in X$, so $w$ was colored blue during the
initialization phase. Furthermore,
all the nodes on $W$ that come after $w$ are also blue at the end. 
This implies that 
$w \in A$, because by assumption a neighor of $w$ was in $B$, and hence
$w$ must have been added to $B$ when this neighbor was processed. 
Hence, we get a contradiction to the fact that 
$\U_{|A} \cap \U_{|\Vred} = \emptyset$, so
$W$ cannot exist. Therefore, $(v,X) \not\in \C$.
\qed\end{proof}

The proof of Theorem~\ref{thm:decide} gives an alternative characterization of
whether a node appears in the unique sink component or not.
\begin{corol}
The node $(v, X)$ does not appear in $\C$ if and only if 
there exists a set $A \subseteq V(G)$ with the following properties: 
\begin{enumerate}
\item $G \backslash A$ has at least two connected components.
\item $\U_{|A} \cap \U_{|B} = \emptyset$, where $B$ denotes the
vertex set of the connected component
of $G \setminus A$ that contains $v$.
\item For all $x \in \U$, $A$ contains either only labels of the form 
\texttt{i}$x$ or only labels of the form \texttt{d}$x$ (or neither). 
If $A$ has a node with label \texttt{i}$x$, then $x \not\in X$.
If $A$ has a node with label \texttt{d}$x$, 
then $x \in X$.
\end{enumerate}
A set $A$ with the above properties can be found in polynomial time.
\qed
\end{corol}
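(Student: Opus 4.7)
The plan is to prove the two implications of the biconditional; the polynomial-time bound then falls out of the forward direction. The forward implication (that $(v, X) \notin \C$ implies the existence of such an $A$) extracts $A$ from the algorithm of Theorem~\ref{thm:decide}, while the reverse implication argues by contradiction against a certifying walk supplied by Lemma~\ref{lem:C-characterization}.

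For the forward direction, if $(v, X) \notin \C$, I run the algorithm of Theorem~\ref{thm:decide}; by assumption it terminates with $\Vred \neq \emptyset$, and I set $A \eqdef \Vred$. Property 1 is already established inside the proof of Theorem~\ref{thm:decide}. Property 3 follows because every vertex of $\Vred$ is red, and redness is determined by the label alone: a node labeled \texttt{i}$x$ is red if and only if $x \notin X$, while a node labeled \texttt{d}$x$ is red if and only if $x \in X$, so the two label types for a common $x$ cannot coexist in $A$. For property 2, I will argue that the component $B$ of $v$ in $G \setminus A$ coincides with the set $A_{\text{thm}}$ of BFS-traversed vertices: the inclusion $A_{\text{thm}} \subseteq B$ is immediate, and conversely any path from $v$ in $G \setminus \Vred$ must stay inside $A_{\text{thm}}$, since any red vertex on such a path that never became blue would remain in some red fringe and end up in $\Vred$, a contradiction. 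Once $B = A_{\text{thm}}$ is in hand, the theorem's statement $\U_{|A_{\text{thm}}} \cap \U_{|\Vred} = \emptyset$ is exactly property 2. Since the algorithm of Theorem~\ref{thm:decide} runs in linear time, this also delivers the polynomial-time construction.

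For the reverse direction, suppose $A$ satisfies the three properties. Since $G$ is connected and $A$ is a separator, $A$ is nonempty, hence $\U_{|A}$ is nonempty as well. Assume for contradiction that a certifying walk $W$ for $(v, X)$ exists (Lemma~\ref{lem:C-characterization}), and let $w$ be the latest node on $W$ whose label uses an element $x \in \U_{|A}$. By property 2, $w \notin B$, so either $w \in A$ or $w$ lies in a component of $G \setminus A$ distinct from $B$. In the first case, property 3 forces the label of $w$ to be \texttt{i}$x$ with $x \notin X$, or \texttt{d}$x$ with $x \in X$, the exact opposite of what a certifying walk demands for the final visit to $x$. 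In the second case, $W$ must still return to $v \in B$, which requires crossing $A$ after $w$ and hence visiting another node whose label lies in $\U_{|A}$, contradicting the choice of $w$. Either way we reach a contradiction, so $(v, X) \notin \C$. The main subtlety of the plan is the identification $B = A_{\text{thm}}$ in the forward direction, since Theorem~\ref{thm:decide} phrases its guarantee in terms of the BFS-traversed set rather than the graph-theoretic component that appears in the corollary.
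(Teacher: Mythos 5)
Your proof is correct and matches the approach the paper clearly intends (the corollary is stated with a bare \qed, meaning the authors read it off directly from the proof of Theorem~\ref{thm:decide}). You correctly set $A \eqdef \Vred$ for the forward direction and use a last-crossing argument against a certifying walk for the reverse direction. You also put your finger on the one genuine subtlety the paper glosses over: Theorem~\ref{thm:decide} proves $\U_{|A_{\mathrm{BFS}}}\cap\U_{|\Vred}=\emptyset$ for the BFS-traversed set $A_{\mathrm{BFS}}$, whereas the corollary speaks of the graph-theoretic component $B$ of $v$ in $G\setminus\Vred$, so identifying the two is needed. The only thing I would tighten is that identification: your one-sentence argument for $B\subseteq A_{\mathrm{BFS}}$ only addresses red vertices that never turn blue, but it should be run as an induction along a path $v=u_0,\dots,u_k$ in $G\setminus\Vred$ with a short case analysis at each step (when $u_i\in A_{\mathrm{BFS}}$ is processed, its neighbor $u_{i+1}$ is either already queued, blue and hence queued, or red and hence placed in a red fringe that must later be flushed into the queue since $u_{i+1}\notin\Vred$). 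Otherwise the argument is complete; in particular your reverse direction correctly observes that $A\neq\emptyset$ forces $\U_{|A}\neq\emptyset$, that the last node of the walk with label in $\U_{|A}$ cannot lie in $B$ by property~2, and then splits on whether it lies in $A$ (contradicting property~3 and the certifying-walk condition) or in another component (forcing a later crossing of $A$, contradicting maximality).
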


\begin{lemma}
Given  $k \in \N$ and a node $(v, X) \in \C$, it is \textup{NP}-complete to 
decide whether there exists a certifying walk for $(v,X)$ of length at most $k$.
\end{lemma}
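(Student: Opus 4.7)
The plan is to prove membership in NP and then show NP-hardness by a polynomial reduction from \textsc{Hamiltonian Cycle}. For membership, I rely on the diameter bound proved above: whenever $(v,X) \in \C$ there is a certifying walk of length $O(n^2)$, so we may assume $k = O(n^2)$ (otherwise the answer is trivially ``yes''). A certifying walk of length at most $k$ is then a polynomial-sized witness whose validity---correct adjacencies, start and end at $v$, every element touched, and correct terminal label for every element---is easy to verify in polynomial time.

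For NP-hardness, let $H = (V_H, E_H)$ with $V_H = \{u_1, \ldots, u_m\}$ be a connected input to \textsc{Hamiltonian Cycle}. I build an event graph $G$ on $2m$ vertices: for each $i$, an \emph{insert node} $a_i$ labeled \texttt{i}$i$ and a \emph{delete node} $b_i$ labeled \texttt{d}$i$. Place an edge $a_i a_j$ in $G$ for every $(u_i, u_j) \in E_H$, and add a pendant edge $a_i b_i$ for each $i$. Set $v := a_1$, $X := \{1, \ldots, m\}$, and $k := m+1$. Because $H$ is connected, so is $G$, and every element occurs with both an insert label and a delete label, so the construction is a legal event graph. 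A long certifying walk (for instance, first visit each $b_i$, then each $a_i$, then return to $v$) witnesses that $(v, X) \in \C$, so the promised hypothesis holds.

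For correctness, in one direction a Hamiltonian cycle of $H$ through $u_1$ lifts to the closed walk $a_1, a_{\pi(2)}, \ldots, a_{\pi(m)}, a_1$ in $G$, which has length $m+1$ and whose last visit to every element is to an insert node; by Lemma~\ref{lem:C-characterization} this is a certifying walk. In the other direction, suppose $W$ is a certifying walk of length at most $m+1$. Since the last visit of $W$ to every element must be an insert, $W$ must visit every $a_i$; together with the fact that $W$ begins and ends at $a_1$, its vertex sequence contains $a_1$ at least twice and every other $a_i$ at least once, accounting for at least $m+1$ vertex occurrences. The budget $m+1$ is thus saturated, which forces $W$ to visit no $b_j$ at all and no $a_i$ more than the minimum. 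Consequently $W$ uses only edges inside the $H$-copy of $G$ and traces out a Hamiltonian cycle of $H$ through $u_1$.

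The main obstacle is calibrating $k$ tightly enough that the pendant edges---which are unavoidable because every element needs both an insert and a delete node---cannot be exploited as shortcuts. A single detour to a $b_j$ adds two extra vertex occurrences, which immediately overruns $k = m+1$ once $W$ has covered the $m$ required insert nodes; this is precisely what allows the reduction to extract a Hamiltonian cycle from any short certifying walk, and is the only part of the construction that is not completely routine.
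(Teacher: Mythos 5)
Your proof is correct, and it takes a genuinely different route from the paper. You reduce from \textsc{Hamiltonian Cycle} whereas the paper reduces from \textsc{Hamiltonian Path}; you build the event graph as a single copy of $H$ carrying all the insert labels with pendant delete vertices $b_i$ hanging off each $a_i$, whereas the paper takes two full copies of $H$ (one with insert labels, one with delete labels) and attaches them to a hub vertex $v_1$ that is adjacent to everything, with a companion hub $v_2$ for the extra element $n+1$. Both constructions exploit the same core mechanism: choosing $X$ to be the full element set forces the last visit to every element to be an insert node, so every insert node must be visited, and a budget tight enough to preclude any slack forces the walk to be exactly a Hamiltonian tour of the $H$-layer. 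Your construction is smaller ($2m$ nodes instead of $2n+2$) and your counting argument is a little cleaner because the delete nodes are pendants and any detour to one costs two extra vertex occurrences, which you show is never affordable once the $m$ mandatory inserts plus the two endpoint occurrences of $a_1$ are accounted for. The paper's construction instead relies on the fact that any short certifying walk must stay inside $G_1$, using the label constraint (last visit must be an insert) to rule out $G_2$ and the hub edges to make the start/end at $v_1$ free. You also spell out the NP-membership argument via the quadratic bound on certifying-walk length, which the paper states as obvious. Both proofs are valid; yours is a perfectly good alternative and slightly more economical.
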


\begin{proof}
The problem is clearly in NP.
To show completeness, we reduce from  Hamiltonian path
in undirected graphs. Let $G$ be an undirected graph with $n$ vertices,
and suppose the vertex set is $\{1, \dots, n\}$. We let $\U = \N$ and take
two copies $G_1$ and $G_2$ of $G$. We label the copy of node $i$ in 
$G_1$ with \texttt{i}$i$ and the copy of node $i$ in $G_2$ with 
\texttt{d}$i$. Then we add two nodes $v_1$ and $v_2$, and we connect
 $v_1$ to $v_2$ and 
to all nodes in $G_1$ and $G_2$, We label $v_1$ with \texttt{i}$(n+1)$
and $v_2$ with $\texttt{d}(n+1)$. The resulting graph $G'$ has $2n+2$
nodes and meets all our assumptions about an event graph.
Clearly, $G'$ can be constructed in polynomial time.
Finally, since by definition a certifying walk must visit for
each element $i$ either $\texttt{i}i$ or $\texttt{d}i$, it follows that
$G$ has a Hamiltonian path if and only if the
node $(v_1, \{1, \dots, n+1\})$ has a certifying walk of length
at most $n+2$. This completes the reduction.
\qed\end{proof}

\section{Successor Searching on Cycle Graphs}\label{1D}

We now consider the case that the 
event graph $G$ is a simple cycle $v_1,\dots, v_{n},v_1$ and 
the item $x_{v_i}$ at node $v_i$ is a real number.
Again, the structure $\mathcal{T}(X)$ is $X$ itself,
and we now have three types of nodes: 
insertion, deletion, and query.
A query at time $t$ asks for 
$\texttt{succ}_{\,X^t}(x_{v^t})= \min\{\,x\in X^t\,|\, x\geq x_{v^t}\,\}$
(or $\infty$).
Again, an example similar to Fig.~\ref{fig:large-diameter}
shows that the decorated graph can be
of exponential size: let $n$ be even. For $i = 1, \dots, n/2 $, 
take $x_{v_i} = x_{v_{n+1-i}} = i$,
and define the operation at $v_i$ as
$\texttt{i}x_{v_i}$ for $i = 1, \dots, n/2$,
and $\texttt{d}x_{v_{n+1-i}}$ for $i = n/2 +1, \dots, n$.
It is easy to design a walk
that produces \emph{any} subset
of $\{1, \dots, n/2\}$ at either $v_1$ or $v_n$,
which 
implies a lower bound of $\Omega( 2^{n/2} )$ on
the size of the decorated graph.

We consider two different walks on $G$.
The \emph{random} walk starts at $v_1$ and hops from a node to 
one of its neighbors with equal probability.
The main result of this section is that for random walks,  maximal compression
is possible.

\begin{theorem}\label{markov-cycle}
Successor searching in a one-dimensional random
walk can be done in constant expected time per step
and linear storage.
\end{theorem}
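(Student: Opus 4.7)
My plan is to exploit the strong locality of a simple random walk on the cycle $C_n$: starting from any vertex, the expected time for the walk to leave a window of $B$ consecutive cycle vertices is $\Theta(B^2)$. This lets us amortize expensive ``rebuild'' costs over the long periods during which the walker lingers in a small neighborhood, and it is ultimately where the gap between the worst-case $O(\log n)$ and the expected $O(1)$ comes from.

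First, I would perform $O(n \log n)$ preprocessing: sort the items $x_{v_1}, \ldots, x_{v_n}$ by real value, and build a static $O(n)$-size structure (a doubly linked sorted list augmented with pointer arrays indexed by cycle position) so that for every node $v_i$ one can, in $O(1)$, access $x_{v_i}$'s predecessor and successor in the full item set.  This static layer never changes and will anchor all subsequent dynamic operations.

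Next, I would maintain the active set $X^t$ using a two-level scheme. At any moment, a \emph{working window} $W$ of $B=\Theta(\sqrt{n})$ consecutive cycle vertices around $v^t$ is equipped with an explicit auxiliary structure (a small array plus sorted list) supporting $O(1)$-time insertion, deletion, and successor search restricted to items whose cycle positions lie in $W$.  When the walker leaves $W$, I re-center $W$ at the new position and rebuild its auxiliary structure in $O(B)$ time using the static layer.  Since the expected time to leave a window of radius $B$ is $\Theta(B^2)$, the amortized rebuild cost is $O(B/B^2) = O(1/B) = o(1)$ per step, while all other per-step work within $W$ is $O(1)$.  Total storage stays $O(n)$.

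The main obstacle is answering the successor query at $v^t$ when the true successor in $X^t$ lies \emph{outside} the current window $W$.  To handle this I would maintain a small ``boundary summary'' recording, for each side of $W$, the nearest active item in sorted real-value order that is not in $W$.  Updating this summary only requires work when the boundary of $W$ is actually crossed (an insertion or deletion strictly outside $W$ cannot happen, since operations occur only at the walker's position $v^t\in W$), so updates are free until a window rebuild.  The delicate point is the rebuild itself: after re-centering, the boundary summary must be recomputed, which should cost $O(B)$ amortized into the same $\Theta(B^2)$-step budget.  A potential-function argument, charging each boundary recomputation against the expected escape time of the preceding window, should close the analysis and give the claimed $O(1)$ expected time per step with $O(n)$ storage.
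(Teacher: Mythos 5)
Your high-level scaffolding is the same as the paper's: divide the cycle into blocks of size $\Theta(\sqrt n)$, rebuild a local structure on entering a new block, and amortize the rebuild cost against the $\Theta(n)$ expected number of steps a one-dimensional random walk spends before escaping a $\sqrt n$-window. That part is sound. But the proposal has a real gap precisely where the technical work lies: you assert, without justification, that ``a small array plus sorted list'' supports $O(1)$-time insertion, deletion, \emph{and successor search} restricted to the window. Constant-time successor search is exactly the problem being solved; a sorted list of the $\Theta(\sqrt n)$ window items gives $O(\log n)$ search, not $O(1)$. The paper's solution is to maintain, for each active element, a pointer from the nearest cycle position (clockwise and counterclockwise from the current node) whose successor is that element. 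Maintaining this pointer invariant under insertions and deletions is nontrivial and is where the range-search data structure of Lemma~\ref{range-search} (instantiated with $\eps=1/3$, giving $O(n^{2/3})$ space per block) comes in; nothing in your proposal plays that role.

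The ``boundary summary'' is also underspecified in a way that matters. After a rebuild, the frozen set $X^{\text{out}}$ of active items located outside the window can have as many as $\Theta(n)$ elements, and different query positions inside the window can have $\Theta(\sqrt n)$ \emph{distinct} successors in $X^{\text{out}}$; storing ``the nearest active item on each side of $W$'' cannot encode all of them. The paper instead computes, upon entry into a block $P_k$, the value $\texttt{succ}_{L_k}(x_{v_i})$ for every $i$, and separately merges $L_k$ into the sorted list of $X^t$, so each node of $P_k$ has its successor pointer into the full active set ready; this costs $O(n)$ per block entry (not $O(\sqrt n)$ as you claim), which the $\Theta(n)$ escape time still absorbs. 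Finally, the paper also de-amortizes via two shifted ``parallel tracks'' so that no single step ever stalls on a rebuild; your pure amortization argument gives a weaker per-step guarantee. The missing ingredient you should focus on is the constant-time successor-pointer maintenance, i.e., the mechanism supplied by Lemma~\ref{range-search} and the pointer invariant from the proof of Theorem~\ref{adversarial-cycle}.
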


First, however, we consider an \emph{adversarial} walk on $G$.
Note that we can always achieve a running time of $O(\log\log n)$ per step
by maintaining a van Emde Boas search structure 
dynamically~\cite{vEmdeBoasKaZi76,vEmdeBoas77}, 
so the interesting question is how little storage
we need if we are to perform each operation in constant time.

\begin{theorem}\label{adversarial-cycle}
Successor searching along an $n$-node cycle in the adversarial model
can be performed in constant time per operation, using
$O(n^{1+\varepsilon})$ storage, for any fixed $\varepsilon >0$.
\end{theorem}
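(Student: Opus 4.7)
The plan is to combine a hierarchical successor index with a precomputation that genuinely exploits the cycle walk: because consecutive queries and updates arise at cycle-adjacent vertices with fixed labels, the adversary cannot generate arbitrary update sequences, and this restricted power is what allows a constant-time bound that would otherwise be ruled out by the known cell-probe lower bounds on dynamic predecessor. Two orderings are in play---the cyclic order of the vertices (controlling the walk) and the sorted order of their real labels (controlling successor queries)---and the data structure must respect both.

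First, I would sort the distinct labels $\{x_{v_1},\dots,x_{v_n}\}$ and represent $X^{t}$ as a bitvector $B^{t}$ indexed by rank, with the rank of every cycle vertex precomputed so that arriving at $v^{t}$ translates in $O(1)$ into a bit position. I would then build a hierarchical index over $B^{t}$ with $k=\lceil 1/\varepsilon\rceil$ levels: a bottom layer of blocks of $b=\Theta(\varepsilon\log n)$ bits, and $k-1$ levels of summary bitmasks, each summarising $n^{\varepsilon}$ children. A single shared lookup table of size $2^{b}\cdot b = n^{O(\varepsilon)}$ resolves ``next $1$-bit after position $p$'' within a block in $O(1)$. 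Queries descend the $k=O(1)$ levels with one lookup per level, and each walk step propagates at most one bit flip up the $k$ levels. This already gives an $O(1)$-time, $O(n^{1+\varepsilon})$-space structure for \emph{amortised} updates.

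The main obstacle is converting this amortised cost into worst-case $O(1)$ without paying the known $\Omega(\log n/\log\log n)$ penalty for general dynamic successor. Here the cycle-walk structure becomes essential: at any moment the adversary's next $\Theta(\varepsilon\log n)$ moves correspond to one of only $n^{O(\varepsilon)}$ length-$b$ walks starting at $v^{t}$. For each cycle vertex and each such continuation I would precompute the resulting sequence of bit flips inside the affected block together with the answer to every intervening successor query, storing all of this in tables of total size $O(n^{1+\varepsilon})$. A small look-ahead buffer then answers each step by a single table access, while a background rebuilding scheme refreshes the tables as the walk crosses block and hierarchy boundaries. Proving that this look-ahead machinery remains consistent under arbitrary backtracking on the cycle, and meshes cleanly with the hierarchical index so that no refresh ever delays a query, is where I expect the most delicate part of the argument to lie.
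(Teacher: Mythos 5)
Your intuition that the cycle constraint is what makes $O(1)$ time possible, by evading the general dynamic predecessor lower bounds, is exactly right, but the machinery you propose does not realize it. Two steps break down. First, the hierarchical index: with branching factor $n^{\varepsilon}$, each internal summary has $n^{\varepsilon}$ bits, and finding the next non-empty child is \emph{not} a single table lookup --- the $2^{b}\cdot b$ table only resolves blocks of $b=\Theta(\varepsilon\log n)$ bits, not $n^{\varepsilon}$ bits. Storing per-node ``next non-empty child'' arrays restores $O(1)$ queries, but then an adversary bouncing between two adjacent cycle nodes that insert and delete the lone element of some high-level subtree forces $\Omega(n^{\varepsilon})$ maintenance work per step, and the cycle constraint does not obviously prevent this. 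Second, and more seriously, the lookahead tables cannot ``precompute the answer to every intervening successor query.'' The $\Theta(\varepsilon\log n)$ cycle vertices in a window have scattered, arbitrary ranks --- there is no single ``affected block'' in rank space --- and the successor of a queried label is in general an element of $X^t$ that lies far away in rank space and outside the window. That answer depends on the global active set, which has exponentially many configurations and so cannot be tabulated by (start vertex, continuation walk) alone. Precomputing only the local contribution still leaves you needing a constant-time global successor structure, which is precisely what you set out to build.

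The paper's proof sidesteps both issues by never maintaining a rank-space index at all. It keeps (i) a sorted doubly-linked list of $X^t$ and (ii), for each active $x$, bidirectional pointers between $x$ and the first cycle node clockwise (resp.\ counterclockwise) from the current position $v^t$ whose successor in $X^t$ is $x$. With these pointers the successor at the current node is read off in $O(1)$, and each step or update changes only $O(1)$ of them. Finding the new pointer targets is reduced to a \emph{static} two-dimensional range-search query over the fixed point set $\{(k, x_{v_k})\}$: ``leftmost point to the right of index $i$ inside the horizontal strip between two given label values.'' Lemma~\ref{range-search} answers such queries in $O(1)$ time with $O(n^{1+\varepsilon})$ space via a recursive table construction. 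The static/dynamic split --- a static range-search oracle on the immutable labels, plus a dynamic linked list with $O(1)$ pointer churn per step --- is the key idea your proposal is missing.
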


Before addressing the walk on $G$, we must consider
the following range searching problem (see also~\cite{CrochemoreIlKuRaWa12}).
Let $Y= y_1,\ldots, y_n$ be a sequence of $n$ distinct numbers, and
consider the points $(k, y_k)$,  for $k = 1, \dots, n$.
A query is given by two indices $i$ and $j$, together with a \emph{type}.
The type is defined as follows:  
the horizontal lines $x \mapsto y_i$ and $x \mapsto y_j$ divide the
plane into three unbounded open strips $R_1$, $R_2$, and $R_3$, numbered from
top to bottom.  For $a = 1,2,3$, let 
$S_a = \{k \in \{1,\dots, n\} \mid (k, y_k) \text{ lies inside } R_a\}$.
The type is specified by the number $a$ together with
a direction $\rightarrow$ or $\leftarrow$. The former is called 
a \emph{right} query, the latter a \emph{left} query.
Let us describe the right query: if $S_a = \emptyset$, the 
result is $\emptyset$.  If $S_a$ contains an index larger than $i$, 
we want the minimum index in $S_a$ larger than $i$.  If all indices in $S_a$ 
are less than $i$, we want the overall minimum index in $S_a$.
The left query is defined symmetrically.
See Fig.~\ref{fig:fig12}(left) for an example.

Thus, there are six types of queries, and we specify a 
query by a triplet $(i,j,\sigma)$, with $\sigma$ to being the type. 
We need the following result, which, as a reviewer
pointed out to us, was also discovered earlier by 
Crochemore~et~al.\cite{CrochemoreIlKuRaWa12}.
We include our proof below for completeness.

\begin{figure}
\begin{center}
\includegraphics{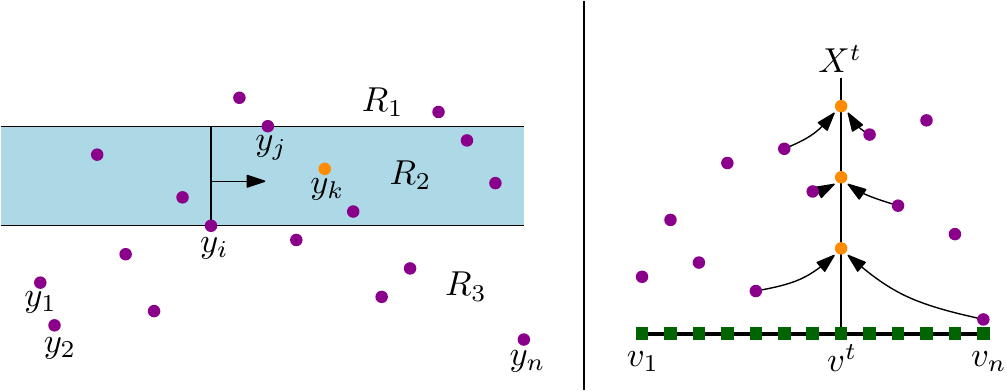}
\end{center}
\caption{Left: the query  $(i,j, (2,\rightarrow))$, we want the leftmost
  point to the right of $y_i$ in the strip $R_2$;
 right: 
the successor data structure. The squares at the bottom 
represent the vertices of the cycle, split at the edge $v_nv_1$
to obtain a better picture. The dots above the cycle nodes represent
the elements $x_{v_i}$. 
The node $v^t$ is the current node, and $X^t$ the active set. We maintain
pointers between each element $x \in X^t$ and the closest clockwise and
counterclockwise node such that the successor in $X^t$ of the corresponding
element is $x$.}
\label{fig:fig12}
\end{figure}

\begin{lemma}\label{range-search}
Any query can be answered in constant time
with the help of a data structure of size $O(n^{1+\varepsilon})$,
for any $\varepsilon >0$.
\end{lemma}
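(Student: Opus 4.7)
My plan is to reduce each query to a 3-sided planar range-minimum problem and solve it with a multi-level range tree of constant depth and large fanout.

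\emph{Reduction.} Reversing the index order turns a left query into a right query, and the strips $R_1, R_3$ are special cases of $R_2$ with one $y$-boundary at $\pm\infty$. It therefore suffices to answer the following: given $i$ and $j_1<j_2$, return the smallest $k>i$ with $y_{j_1}<y_k<y_{j_2}$, falling back to the overall smallest such $k$ if none exists. Replacing each $y_k$ by its rank $\rho_k$ in $\{y_1,\dots,y_n\}$ gives an equivalent problem on the integer planar point set $P=\{(k,\rho_k)\}$: a 3-sided range-minimum query on the first coordinate.

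\emph{Structure.} I build a balanced $b$-ary range tree $T$ on the $y$-coordinates with fanout $b=\lceil n^{\eps/c}\rceil$ and depth $d=O(1/\eps)$ for a suitable constant $c$. For each internal node $v$ of $T$ and each ordered pair $(u_1,u_2)$ of children of $v$, let $S_{v,u_1,u_2}$ denote the set of points lying in the sibling subtrees strictly between $u_1$ and $u_2$. On each such set I recursively build an analogous $b$-ary range tree on the first coordinate, supporting $O(1)$-time 1D successor queries (``smallest $k>i$, with fallback to the overall minimum''); at the deepest recursion, the remaining sets are small enough to be handled by direct tabulation.

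\emph{Space.} A point of $P$ belongs to $O(b^2)$ middle-strip sets at each of its $O(d)$ ancestors; since the secondary recursion has the same form, the total multiplicity is $O(b^4 d^2)=n^{O(\eps)}$ per point, so the total storage is $O(n^{1+\eps})$ after choosing $c$ large enough.

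\emph{Query.} The two $y$-boundaries $\rho_{j_1},\rho_{j_2}$ descend $T$ along two paths that diverge at their LCA. The middle strip splits into exactly one piece of the form $S_{v,u_1,u_2}$ at the LCA plus $O(d)$ further such pieces, one per level of each diverging path; overall $O(d)=O(1)$ precomputed pieces. For each piece the secondary structure returns a candidate index in $O(1)$ time, and the overall minimum of these $O(1)$ candidates is the answer. The ``fallback'' branch is handled identically by a parallel set of structures that ignore the threshold $i$.

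\emph{Main obstacle.} The crux is achieving constant query time despite the large fanout: a naive range-tree decomposition into individual canonical subtrees yields $O(bd)=\omega(1)$ pieces. The fix is to tabulate at every node of $T$ not only the individual subtrees but every \emph{union of consecutive siblings}, i.e.\ every $S_{v,u_1,u_2}$, so that the query's decomposition is $O(d)$ pieces rather than $O(bd)$, and to recurse on the first coordinate by the same scheme. Balancing the fanout $b$, the depth $d$, and the per-strip secondary storage, as described above, yields the claimed $O(n^{1+\eps})$-space, $O(1)$-time trade-off.
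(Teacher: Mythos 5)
Your outer layer — a $b$-ary tree of depth $O(1/\eps)$ on the $y$-ranks, with every union of consecutive siblings tabulated — is sound and is essentially the same family of ideas as the paper (the paper's ``Zlinks'' play the role of your sibling unions, anchored at $y_i$). The problem is the claim that each canonical piece $S_{v,u_1,u_2}$ supports $O(1)$-time 1D successor queries ``smallest $k>i$ in $S$'' via ``an analogous $b$-ary range tree on the first coordinate,'' falling back to ``direct tabulation'' at the base.

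The difficulty is that the query index $i$ ranges over all of $[n]$, not over the indices of elements of $S$. In the outer tree you can descend in $O(1)$ per level by arithmetic, because the tree is on the fixed rank universe $[n]$; but an inner tree built on $S$'s indices has data-dependent split points, and locating $i$ among them is a predecessor query over $b=n^{\Theta(\eps)}$ keys, not $O(1)$. Building the inner tree on the full universe $[n]$ instead restores arithmetic descent, but an unpruned tree costs $\Theta(n\cdot b)$ per piece (fatal, since there are $\Theta(nb)$ pieces), and a pruned one needs a mechanism (e.g.\ perfect hashing of the surviving node IDs) to find the node at each level; your write-up describes neither, and the ``per-point multiplicity'' space accounting silently assumes storage is proportional to $\sum_S|S|$, which fails for any variant that keeps per-piece tables indexed by $[n]$. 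The paper sidesteps exactly this with its \textbf{Ylinks}: at every level it translates the foreign index $i$ into the local index $i'$ of the highest-indexed element of the target group to the left of $i$, and then recurses with $(i',j)$. That translation is what makes the descent $O(1)$ per level; some equivalent (Ylinks, or pruned implicit trees plus perfect hashing plus a per-$i$ table for the LCA and the $\rho_i$-side pieces) is missing from your argument. Note also that the pieces on $\rho_j$'s side of the decomposition are at ancestors of $\rho_j$'s leaf, not $\rho_i$'s, so they cannot simply be precomputed ``per $i$'' either — this is the part the Ylink translation is really for.
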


\noindent
Using Lemma~\ref{range-search}, we can prove 
Theorem~\ref{adversarial-cycle}.
\begin{proof}[of Theorem~\ref{adversarial-cycle}]
At any time $t$, the algorithm has at its disposal:
(i) a sorted doubly-linked list of the active set $X^t$ 
(augmented with $\infty$);
(ii) a (bidirectional) pointer to each $x\in X^t$ from
the first node $v_k$ on the circle clockwise from $v^t$, if
it exists, such that ${\tt succ}_{\,X^t}(x_{v_k})= x$
(same thing counterclockwise)---see Fig.~\ref{fig:fig12}(right).
Assume now that the data structure of Lemma~\ref{range-search} has been
set up over $Y =  x_{v_1},\ldots, x_{v_n}$.
As the walk enters node $v^t$ at time $t$, ${\tt succ}_{\,X^t}(x_{v^t})$
is thus readily available and we can update $X^t$ in $O(1)$ time.
The only remaining question is how to maintain (ii). 
Suppose that the operation at node $v^t$ is a successor request
and that the walk reached $v^t$ clockwise.
If $x$ is the successor, then 
we need to find the first node $v_k$ on the cycle clockwise from $v^t$ 
such that ${\tt succ}_{\,X^t}(x_{v_k})= x$.
This can be handled by two range search queries $(i,j,\sigma)$:
for $i$, use the index of the current node $v^t$; and, for $j$, 
use the node for $x$ in the first query and the node for $x$'s 
predecessor in $X^t$ in the second query.
An insert can be handled by two such queries (one on each side of $v_t$),
while a delete requires pointer updating, but no range search queries.
\qed\end{proof}

\begin{proof}[of Lemma~\ref{range-search}]
We define a single data structure to handle all six types
simultaneously. We restrict our discussion to the type $(2, \rightarrow)$ from
Fig.~\ref{fig:fig12}(left) but kindly invite the reader to 
check that all other five types can be handled in much the same way.
We prove by induction that with $s c n^{1+1/s}$ storage, for a
large enough constant $c$, any query can be answered in at most $O(s)$ 
table lookups.
The case $s=1$ being obvious (precompute all queries),
we assume that $s>1$.
Sort and partition $Y$ into consecutive groups $Y_1<\cdots < Y_{n^{1/s}}$
of size $n^{1-1/s}$ each. We have two sets of tables:

\begin{itemize}
\item
\textbf{Ylinks}: for each $y_i\in Y$, 
link $y_i$ to the highest-indexed element $y_j$ to the left of $i$ ($j<i$) 
within each group $Y_{1},\ldots, Y_{n^{1/s}}$, wrapping around the
strip if necessary
(left pointers in Fig.~\ref{fig:fig34}(left)).
\item
\textbf{Zlinks}: for each $y_i\in Y$, 
find the group $Y_{\ell_i}$ to which $y_i$ belongs and,
for each $k$, define $Z_k$ as the subset of $Y$
sandwiched  between $y_i$ and the smallest (resp.~largest) 
element in $Y_k$ if $k\leq \ell_i$ (resp.  $k\geq \ell_i$).
Note that this actually defines two sets for $Z_{\ell_i}$, so that
the total number of $Z_k$'s is really $n^{1/s}+1$.
Link $y_i$ to the lowest-indexed $y_j$ ($j>i$) in each $Z_k$
(right pointers in Fig.~\ref{fig:fig34}(left)), again
wrapping around if necessary.

\item
Prepare a data structure of type $s-1$ recursively for each $Y_i$.
\end{itemize}

Given a query $(i,j)$ of type $(2, \rightarrow)$, we first 
check whether it
fits entirely within $Y_{\ell_i}$ and, if so, solve it recursively.
Otherwise, we break it down into two subqueries: 
one of them can be handled directly by using the relevant Zlink.
The other one fits entirely within a single $Y_k$.
By following the corresponding Ylink, we find $y_{i'}$
and solve the subquery recursively by converting it
into another query $(i',j)$ of appropriate type
(Fig.~\ref{fig:fig34}(right)).
By induction, it follows that this takes $O(s)$ total lookups and storage 
\[
dn^{1+1/s} + (s-1) c n^{1/s + (1-1/s)(1+1/(s-1))} =
dn^{1+1/s} + (s-1) c n^{1 + 1/s} \leq 
s c n^{1+1/s}, 
\]
for some constant $d$ and for $c$ large enough, since
\[
\Bigl(1-\frac{1}{s}\Bigr)\Bigl(1+\frac{1}{s-1}\Bigr) = 
\frac{s-1}{s} \frac{s}{s-1} = 1.
\]
\qed\end{proof}

\begin{figure}
\begin{center}
\includegraphics{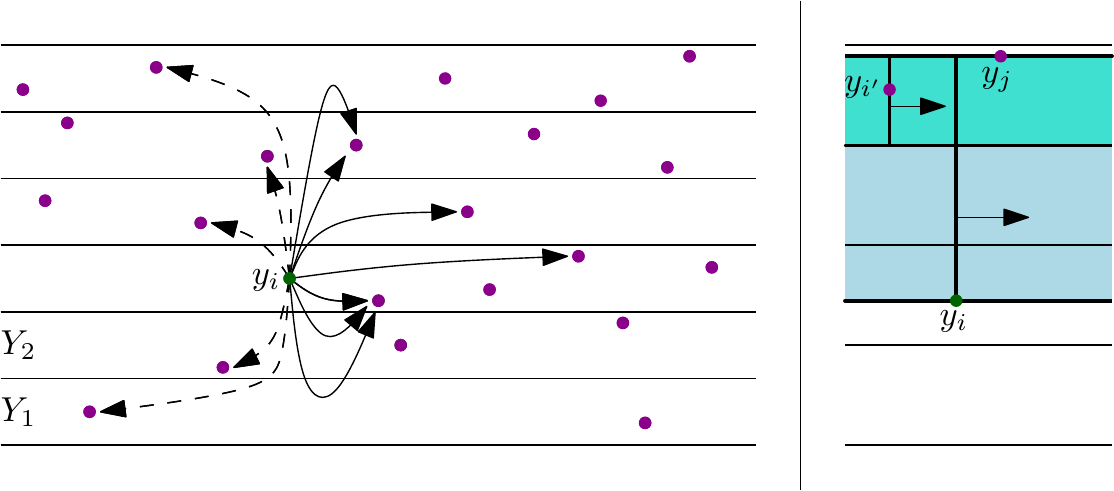}
\end{center}
\caption{Left: the recursive data structure: The Ylinks (dashed) point
to the rightmost point to the left of $y_i$ in each strip.
The ZLinks point to the leftmost point in each block defined by
$y_i$ and a consecutive sequence of strips; right:
a query $(i,j)$ is decomposed into a part handled by
a ZLink and a part that is handled recursively. }
\label{fig:fig34}
\end{figure}

Using Theorem~\ref{adversarial-cycle} together
with the special properties of a random walk on $G$,
we can quickly derive the algorithm for Theorem~\ref{markov-cycle}.

\begin{proof}[of Theorem~\ref{markov-cycle}]
The idea is to divide up the cycle
into $\sqrt{n}$ equal-size paths $P_1,\ldots, P_{\sqrt{n}}$
and prepare an adversarial data structure for 
each one of them right upon entry. The high
cover time of a one-dimensional random walk 
is then invoked to amortize the costs.
De-amortization techniques are then used to make the costs worst-case.
The details follow. As soon as the walk enters
a new $P_k$, the data structure of Lemma~\ref{range-search}
is built from scratch for $\varepsilon=1/3$, at a cost in
time and storage of $O(n^{2/3})$.
By merging $L_k= \{\,x_{v_i}\,|\, v_i\in P_k\, \}$ with the doubly-linked list
storing $X^t$, we can set up all the needed successor links
and proceeds just as in Theorem~\ref{adversarial-cycle}.
This takes $O(n)$ time per interpath transition
and requires $O(n^{2/3})$ storage.
There are few technical difficulties that we now
address one by one.

\begin{figure}[ht]
\begin{center}
\includegraphics{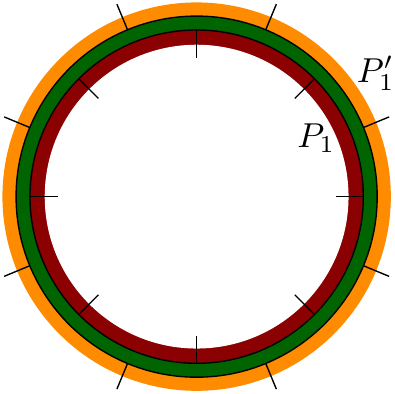}
\end{center}
\caption{The parallel tracks on the cycle.}
\label{fig:fig5}
\end{figure}

\begin{itemize}
\item
Upon entry into a new path $P_k$, we must set up 
successor links from $P_k$ to $X^t$, which takes $O(n)$ time.
Rather than forcing the walk to a halt, we use a ``parallel track'' idea
to de-amortize these costs. (Fig.~\ref{fig:fig5}).
Cover the cycle with paths $P_i'$ shifted from
$P_i$ clockwise by $\frac{1}{2}\sqrt{n}$.
and carry on the updates in parallel on both tracks.
As we shall see below, we can ensure that
updates do not take place simultaneously on both tracks.
Therefore, one of them is always
available to answer successor requests in constant time.
\item
Upon entry into a new path $P_k$ (or $P_k'$),
the relevant range search structure must be built from scratch.
This work does not require knowledge of $X^t$
and, in fact, the only reason
it is not done in preprocessing is to save storage.
Again, to avoid having to interrupt the walk,
while in $P_k$ we ensure that the needed
structures for the two adjacent paths $P_{k-1},P_{k+1}$ are already available
and those for $P_{k-2},P_{k+2}$ are under construction.
(Same with $P_k'$.) 
\item
On a path, we do not want our range queries to wrap around as 
in the original structure. Thus, if a right query returns an index
smaller than $i$, or a left query returns an index larger than $i$,
we change the answer to $\emptyset$.
\item
The range search structure can only handle queries $(i,j)$ for which
{\em both} $y_i$ and $y_j$ are in the ground set. Unfortunately, $j$ may not be,
for it may correspond to an item of $X^t$ inserted prior
to entry into the current $P_k$.
There is an easy fix: upon entering $P_k$, compute and store
${\tt succ}_{\,L_k}(x_{v_i})$ for $i=1,\ldots, n$.
Then, simply replace a query $(i,j)$ by $(i,j')$ where $j'$
is the successor (or predecessor) in $L_k$.
\end{itemize}

The key idea now is that a one-dimensional random walk has 
a quadratic cover time~\cite{MotwaniRa95};
therefore, the expected
time between any change of paths on one track and
the next change of paths on the other track is $\Theta(n)$.
This means that if we dovetail the parallel updates by
performing a large enough number of them per walking step,
we can keep the expected time per operation constant.
This proves Theorem~\ref{markov-cycle}.
\qed\end{proof}

\section{Conclusion}

We have presented a new approach to model and analyze restricted
query sequences that is inspired by Markov chains. 
Our results only scratch the surface of a rich body of questions.
For example, even for the simple problem of the adversarial walk
on a path, we still do not know whether we can beat van Emde
Boas trees with linear space. Even though there is some 
evidence that the known lower bounds for successor searching
on a pointer machine
give the adversary a lot of leeway~\cite{Mulzer09}, our
lower bound technology
does not seem to be advanced enough for this setting.
Beyond paths and cycles, of course, there are several other
simple graph classes to be explored, e.g., trees or planar
graphs.

Furthermore, there are more fundamental questions on decorated
graphs to be studied. For example, how hard is it to count
the number of distinct active sets (or the number of nodes) 
that occur in the unique sink component of $\dec(G)$? 
What can we say about the  behaviour of the active
set in the limit as the walk proceeds randomly? And what happens
if we go beyond the dictionary problem and consider the evolution
of more complex structures during a walk on the event graph?

\section*{Acknowledgments}
We would like to thank the anonymous referees for their thorough
reading of the paper and their many helpful suggestions that have
improved the presentation of this paper, as well
as for pointing out~\cite{CrochemoreIlKuRaWa12} to us.

W. Mulzer was supported in part by DFG grant MU3501/1.

\bibliographystyle{abbrv}
\bibliography{dec}

\newcommand{\SortNoop}[1]{}
\begin{thebibliography}{10}

\bibitem{Chassaing93}
P.~Chassaing.
\newblock Optimality of move-to-front for self-organizing data structures with
  locality of references.
\newblock {\em The Annals of Applied Probability}, 3(4):1219--1240, 1993.

\bibitem{Chazelle00}
B.~Chazelle.
\newblock {\em The discrepancy method: randomness and complexity}.
\newblock Cambridge University Press, Cambridge, 2000.

\bibitem{ChazelleMu09}
B.~Chazelle and W.~Mulzer.
\newblock Markov incremental constructions.
\newblock {\em Discrete \& Computational Geometry}, 42(3):399--420, 2009.

\bibitem{CrochemoreIlKuRaWa12}
M.~Crochemore, C.~S. Iliopoulos, M.~Kubica, M.~S. Rahman, G.~Tischler, and
  T.~Wale.
\newblock Improved algorithms for the range next value problem and
  applications.
\newblock {\em Theoretical Computer Science}, 434:23--34, 2012.

\bibitem{vEmdeBoas77}
P.~{\SortNoop{Emde Boas}}van Emde~Boas.
\newblock Preserving order in a forest in less than logarithmic time and linear
  space.
\newblock {\em Information Processing Letters}, 6(3):80--82, 1977.

\bibitem{vEmdeBoasKaZi76}
P.~{\SortNoop{Emde Boas}}van Emde~Boas, R.~Kaas, and E.~Zijlstra.
\newblock Design and implementation of an efficient priority queue.
\newblock {\em Mathematical Systems Theory}, 10(2):99--127, 1976.

\bibitem{Hotz93}
G.~Hotz.
\newblock Search trees and search graphs for {M}arkov sources.
\newblock {\em Elektronische Informationsverarbeitung und Kybernetik},
  29(5):283--292, 1993.

\bibitem{KapoorRe91}
S.~Kapoor and E.~M. Reingold.
\newblock Stochastic rearrangement rules for self-organizing data structures.
\newblock {\em Algorithmica}, 6(2):278--291, 1991.

\bibitem{KarlinPhRa00}
A.~R. Karlin, S.~J. Phillips, and P.~Raghavan.
\newblock {M}arkov paging.
\newblock {\em SIAM Journal on Computing}, 30(3):906--922, 2000.

\bibitem{KonnekerVa81}
L.~K. Konneker and Y.~L. Varol.
\newblock A note on heuristics for dynamic organization of data structures.
\newblock {\em Information Processing Letters}, 12(5):213--216, 1981.

\bibitem{LamLeSi84}
K.~Lam, M.~Y. Leung, and M.~K. Siu.
\newblock Self-organizing files with dependent accesses.
\newblock {\em Journal of Applied Probability}, 21(2):343--359, 1984.

\bibitem{LevinPeWi09}
D.~A. Levin, Y.~Peres, and E.~L. Wilmer.
\newblock {\em Markov chains and mixing times}.
\newblock American Mathematical Society, Providence, RI, 2009.

\bibitem{MotwaniRa95}
R.~Motwani and P.~Raghavan.
\newblock {\em Randomized algorithms}.
\newblock Cambridge University Press, Cambridge, 1995.

\bibitem{Mulzer09}
W.~Mulzer.
\newblock A note on predecessor searching in the pointer machine model.
\newblock {\em Information Processing Letters}, 109(13):726--729, 2009.

\bibitem{PhatarfodPrDy97}
R.~M. Phatarfod, A.~J. Pryde, and D.~Dyte.
\newblock On the move-to-front scheme with {M}arkov dependent requests.
\newblock {\em Journal of Applied Probability}, 34(3):790--794, 1997.

\bibitem{SchulzSc96}
F.~Schulz and E.~Sch{\"o}mer.
\newblock Self-organizing data structures with dependent accesses.
\newblock In {\em Proceedings of the 23rd International Colloquium on Automata,
  Languages, and Programming (ICALP)}, pages 526--537, 1996.

\bibitem{ShedlerTu72}
G.~S. Shedler and C.~Tung.
\newblock Locality in page reference strings.
\newblock {\em SIAM Journal on Computing}, 1(3):218--241, 1972.

\bibitem{VitterKr96}
J.~S. Vitter and P.~Krishnan.
\newblock Optimal prefetching via data compression.
\newblock {\em Journal of the ACM}, 43(5):771--793, 1996.

\end{thebibliography}

\end{document}